\DeclareMathOperator{\grad}{grad} 
\DeclareMathOperator{\curl}{curl} 
\DeclareMathOperator{\dive}{div} 
\DeclareMathOperator{\inflow}{inflow} 
\DeclareMathOperator{\outflow}{outflow} 
\DeclareMathOperator{\sgn}{sgn} 
\DeclareMathOperator{\diag}{diag} 
\DeclareMathOperator{\im}{im} 
\DeclareMathOperator{\rank}{rank} 
\DeclareMathOperator{\nullity}{nullity}
\newcommand{\tp}{{\scriptscriptstyle\mathsf{T}}}
\newcommand{\Wedge}{\mathord{\adjustbox{valign=B,totalheight=.65\baselineskip}{$\bigwedge$}}}
\title{Hodge Laplacians on Graphs\thanks{Submitted to the editors \today.
\funding{This work is supported by DARPA D15AP00109, NSF IIS 1546413, a DARPA Director's Fellowship, and the Eckhardt Faculty Fund.}}}
\author{Lek-Heng Lim\thanks{Computational and Applied Mathematics Initiative, Department of Statistics, The University of Chicago, Chicago, IL (\email{lekheng@galton.uchicago.edu})}
}
\begin{document}

\maketitle

\begin{abstract}
This is an elementary introduction to the Hodge Laplacian on a graph, a higher-order generalization of the graph Laplacian. We will discuss basic properties including cohomology and Hodge theory. The main feature of our approach is simplicity, requiring only knowledge of linear algebra and graph theory. We have also isolated the algebra from the topology to show that a large part of cohomology and Hodge theory is nothing more than the linear algebra of matrices satisfying $AB = 0$. For the remaining topological aspect, we cast our discussions entirely in terms of graphs as opposed to less-familiar topological objects like simplicial complexes.
\end{abstract}

\begin{keywords}
Cohomology, Hodge decomposition, Hodge Laplacians, graphs
\end{keywords}

\begin{AMS}
05C50, 58A14, 20G10
\end{AMS}

\section{Introduction}

The primary goal of this article is to introduce readers to the Hodge Laplacian on a graph and discuss some of its properties, notably the Hodge decomposition. To understand its significance, it is inevitable that we will also have to discuss the basic ideas behind cohomology, but we will do so in a way that is as elementary as possible and with a view towards applications in the information sciences. 

If the classical Hodge theory on Riemannian manifolds \cite{H, W} is ``differentiable Hodge theory,''  the Hodge theory on metric spaces  \cite{BSSS, SS} ``continuous Hodge theory,'' and the Hodge theory on simplicial complexes \cite{D1, Eck} ``discrete Hodge theory,'' then the version here may be considered ``graph-theoretic Hodge theory.''

Unlike physical problems arising from areas such as continuum mechanics or electromagnetics, where the differentiable Hodge--de Rham theory has been applied with great efficacy for both modeling and computations \cite{feec, AFW, DGLZ, KE, MMOC, TJ, TLHD}, those arising from data analytic  applications are likely to be far less structured \cite{Austin, link, deep, CMOP, COP, data, DMV, synchron, HKW, JLYY, resist, brain,  ODO, dim, group, denoise, XHJYYL, cryo, coverage}. Often one could at best  assume some weak notion of proximity of data points. The Hodge theory introduced in this article requires nothing more than the data set having the structure of an undirected graph and is conceivably more suitable for non-physical applications such as those arising from the biological or information sciences (see Section~\ref{sec:app}).

Our simple take on  cohomology and Hodge theory requires only linear algebra and graph theory. In our approach, we have isolated the algebra from the topology to show that a large part of cohomology and Hodge theory is nothing more than the linear algebra of matrices satisfying $AB = 0$. For the remaining topological aspect, we cast our discussions entirely in terms of graphs as opposed to less-familiar topological objects like simplicial complexes. We believe that by putting these in a simple framework, we could facilitate the development of  applications as well as communication with practitioners who may not otherwise see the utility of these notions.

We write with a view towards readers whose main interests may lie in machine learning, matrix computations, numerical PDEs, optimization, statistics, or theory of computing, but have a casual interest in the topic and may perhaps want to explore potential applications in their respective fields. To enhance the pedagogical value of this article, we have provided complete proofs and fully worked-out examples in Section~\ref{sec:proofs}.

The occasional whimsical section headings are inspired by \cite{bumper, ped1, ped2, DGC, onefoot}.

\section{Cohomology and Hodge theory for pedestrians}\label{sec:ped}

We will present in this section what we hope is the world's most elementary approach towards \textit{cohomology} and \textit{Hodge theory}, requiring only linear algebra. 

\subsection{Cohomology on a bumper sticker}\label{sec:coho}

Given two matrices $A \in \mathbb{R}^{m\times n}$ and $B \in \mathbb{R}^{n \times p}$ satisfying the property that
\begin{equation}\label{eq:null}
AB =0,
\end{equation}
a property equivalent to
\[
\im(B) \subseteq \ker(A),
\]
the \textit{cohomology group} with respect to $A$ and $B$ is the quotient vector space
\[
\ker(A)/\im(B),
\]
and its elements are called \textit{cohomology classes}. The word `group' here refers to the structure of $\ker(A)/\im(B)$ as an abelian group under addition.

We have fudged a bit because we haven't yet defined the matrices $A$ and $B$. Cohomology usually refers to a special case where $A$ and $B$ are certain matrices with topological meaning, as we will define in Section~\ref{sec:graph}.

\subsection{Harmonic representative}\label{sec:harm}

The definition in the previous section is plenty simple, provided the reader knows what a quotient vector space is, but can it be further simplified? For instance, can we do away with quotient spaces and equivalence classes\footnote{Practitioners tend to dislike working with equivalence classes of objects. One reason is that these are often tricky to implement in a computer program.} and define cohomology classes as actual vectors in $\mathbb{R}^n$?

Note that an element in $\ker(A)/\im(B)$ is a set of vectors
\[
x + \im(B) \coloneqq \{ x+y \in \mathbb{R}^n :  y \in \im(B)\}
\]
for some $x \in \ker(A)$. We may avoid such equivalence classes if we could choose an $x_H \in x + \im(B)$ in some unique way to represent the entire set. A standard way to do this is to pick $x_H$ so that it is orthogonal to every other vector in $\im(B)$. Since $\im(B)^\perp = \ker(B^*)$, this is equivalent to requiring that $x_H \in \ker(B^*)$. Hence we should pick an $x_H \in \ker(A) \cap \ker(B^*)$. Such an $x_H$ is called a \textit{harmonic representative} of the cohomology class $x + \im(B)$.

The map that takes the cohomology class $x + \im(B)$ to its unique harmonic representative $x_H$ gives a natural isomorphism of vector spaces (see Theorem~\ref{thm:cohomology})
\begin{equation}\label{eq:natural}
\ker(A)/\im(B) \cong \ker(A) \cap \ker(B^*).
\end{equation}
So we may redefine the cohomology group with respect to $A$ and $B$ to be the subspace $\ker(A) \cap \ker(B^*)$ of $\mathbb{R}^n$, and a cohomology class may now be regarded as an actual vector $x_H \in  \ker(A) \cap \ker(B^*)$.

A word about our notation: $B^*$ denotes the adjoint of the matrix $B$. Usually we  will work over $\mathbb{R}$ with the standard $l^2$-inner product on our spaces and so $B^* = B^\tp $. However we would like to allow for the possibility of working over $\mathbb{C}$ or with other inner products.

\paragraph{Linear algebra interlude} For those familiar with numerical linear algebra, the way we choose a unique harmonic representative $x_H$ to represent a cohomology class $x + \im(B)$ is similar to how we would impose uniqueness on a solution to a linear system of equations by requiring that it has minimum norm among all solutions \cite[Section~5.5]{GMW}. More specifically, the solutions to $Ax=b$ are given by $x_0 + \ker(A)$ where $x_0$ is any particular solution; we impose uniqueness by requiring that $x_0 \in \ker(A)^\perp = \im(A^*) $, which gives the minimum norm (or pseudoinverse) solution $x_0 = A^\dag b$. The only difference above is that we deal with two matrices $A$ and $B$ instead of a single matrix $A$.

\subsection{Hodge theory on one foot}\label{sec:hodge}

We now explain why an element in $\ker(A) \cap \ker(B^*)$ is called  `harmonic.' Again assume that $AB = 0$, the \textit{Hodge Laplacian} is the matrix
\begin{equation}\label{eq:hodge}
A^*A + BB^*  \in \mathbb{R}^{n \times n}.
\end{equation}
We may show (see Theorem~\ref{thm:hodge}) that
\begin{equation}\label{eq:inter}
\ker(A^*A + BB^*) = \ker(A) \cap \ker(B^*).
\end{equation}
So the harmonic representative $x_H$ that we constructed in Section~\ref{sec:harm} is a solution to the \textit{Laplace equation}
\begin{equation}\label{eq:Lap}
(A^*A + BB^*)x = 0.
\end{equation}
Since solutions to the Laplace equation are called \textit{harmonic} functions, this explains the name `harmonic' representative.

With this observation, we see that we could also have defined the cohomology group (with respect to $A$ and $B$) as the kernel of the Hodge Laplacian since
\[
\ker(A)/\im(B) \cong \ker(A^*A + BB^*).
\]
We may also show (see Theorem~\ref{thm:hodge}) that there is a \textit{Hodge decomposition}, an orthogonal direct sum decomposition
\begin{equation}\label{eq:decomp}
\mathbb{R}^n = \im(A^*) \oplus \ker(A^*A + BB^*) \oplus \im(B).
\end{equation}
In other words, whenever $AB = 0$, every $x \in \mathbb{R}^n$ can be decomposed uniquely as
\[
x = A^*w + x_H + Bv, \qquad \langle A^*w, x_H \rangle = \langle x_H, Bv \rangle =\langle A^*w, Bv \rangle =0,
\]
for some $v \in \mathbb{R}^p$ and $w \in \mathbb{R}^m$.

Recall the well-known decompositions (sometimes called the Fredholm alternative, see Theorem~\ref{thm:fredholm}) associated with the four fundamental subspaces \cite{Strang} of a matrix  $A \in \mathbb{R}^{m \times n}$,
\begin{equation}\label{eq:fred}
\mathbb{R}^n = \ker(A) \oplus \im(A^*), \quad 
\mathbb{R}^m = \ker(A^*) \oplus \im(A).
\end{equation}
The Hodge decomposition \eqref{eq:decomp} may be viewed as an analogue of \eqref{eq:fred} for a pair of matrices satisfying $AB=0$. In fact, combining \eqref{eq:decomp} and \eqref{eq:fred}, we obtain
\[
\mathbb{R}^n = \rlap{$\overbrace{\phantom{\im(A^*) \oplus \ker(A^*A + BB^*)}}^{\ker(B^*)}$}\im(A^*) \oplus \underbrace{\ker(A^*A + BB^*) \oplus \im(B)}_{\ker(A)}.
\]
The intersection of $\ker(A)$ and $\ker(B^*)$ gives $\ker(A^*A + BB^*)$, confirming \eqref{eq:inter}. Since $A^*A + BB^*$ is Hermitian, it also follows that
\begin{equation}\label{eq:sum}
\im(A^*A + BB^*)  = \im(A^*) \oplus \im(B).
\end{equation}

For the special case when $A$ is an arbitrary matrix and $B=0$, the Hodge decomposition \eqref{eq:decomp} becomes
\begin{equation}\label{eq:decomp0}
\mathbb{R}^n =\im(A^*) \oplus \ker(A^*A),
\end{equation}
which may also be deduced directly from the Fredholm alternative \eqref{eq:fred} since
\begin{equation}\label{eq:normal}
\ker(A^* A) = \ker(A).
\end{equation}

\paragraph{Linear algebra interlude}  To paint an analogy like that in the last paragraph of Section~\ref{sec:harm}, our characterization of cohomology classes as solutions to the Laplace equation \eqref{eq:Lap} is similar to the characterization of solutions to a least squares problem $\min_{x \in \mathbb{R}^n} \lVert Ax - b \rVert$ as solutions to its normal equation $A^*A x = A^* b$ \cite[Section~6.3]{GMW}. Again the only difference is that here we deal with two matrices instead of just one.

\subsection{Terminologies}\label{sec:terms}

One obstacle that the (impatient) beginner often faces when learning cohomology is the considerable number of scary-sounding terminologies that we have by-and-large avoided in the treatment above.

In Table~\ref{tab:terms1}, we summarize some commonly used terminologies for objects in Sections~\ref{sec:coho}, \ref{sec:harm},  and \ref{sec:hodge}. Their precise meanings will be given in Sections~\ref{sec:graph} and \ref{sec:HO}, with an updated version of this table appearing as Table~\ref{tab:terms2}.
\begin{table}[h!]
\centering
\caption{Topological jargons (first pass)}
\label{tab:terms1}
\vspace*{-1.5ex}
\renewcommand{\arraystretch}{1.25}
\begin{tabular}{|l|l|}
\hline
\textsc{name} & \textsc{meaning}\\
\hline
coboundary maps & $A \in \mathbb{R}^{m\times n}$, $B \in \mathbb{R}^{n \times p}$\\
cochains & elements $x \in \mathbb{R}^n$\\
cochain complex & $\mathbb{R}^p \xrightarrow{B} \mathbb{R}^n \xrightarrow{A} \mathbb{R}^m$\\
cocycles & elements of $\ker(A)$\\
coboundaries & elements of $\im(B)$\\
cohomology classes & elements of $ \ker(A)/\im(B)$\\
harmonic cochains & elements of $\ker(A^*A + BB^*) $\\
Betti numbers & $\dim \ker(A^*A + BB^*) $\\
Hodge Laplacians & $A^*A + BB^* \in \mathbb{R}^{n\times n}$\\
$x$ is closed & $Ax = 0$\\
$x$ is exact & $x = Bv$ for some $v \in \mathbb{R}^p$\\
$x$ is coclosed & $B^*x = 0$\\
$x$ is coexact & $x = A^*w$ for some $w \in \mathbb{R}^m$\\
$x$ is harmonic & $(A^*A + BB^*)x =0 $\\
\hline
\end{tabular}
\end{table}
As the reader can see, there is some amount of redundancy in these terminologies; e.g., saying that a cochain is exact is the same as saying that it is a coboundary. This can sometimes add to the confusion for a beginner. It is easiest to just remember equations and disregard jargons. When people say things like `a cochain is harmonic if and only if it is closed and coclosed,' they are just verbalizing \eqref{eq:inter}.

In summary, we saw \textit{three different ways of defining cohomology}: If $A$ and $B$ are matrices satisfying $AB = 0$, then the cohomology group with respect to $A$ and $B$ may be taken to be any one of the following,
\begin{equation}\label{eq:cohomology}
\ker(A) / \im (B), \qquad
\ker(A) \cap \ker (B^*), \qquad
\ker(A^*A + BB^*).
\end{equation}

For readers who may have heard of the term \textit{homology}, that can be defined just by taking adjoints. Note that if $AB = 0$, then $B^* A^* = 0$ and we may let $B^*$ and $A^*$ play the role of $A$ and $B$ respectively. The homology group with respect to $A$ and $B$ may then be taken to be any one of the following,
\begin{equation}\label{eq:homology}
\ker(B^*) / \im (A^*), \qquad
\ker(B^*) \cap \ker (A), \qquad
\ker(BB^* + A^*A).
\end{equation}
As we can see, the last two spaces in \eqref{eq:cohomology} and \eqref{eq:homology} are identical, i.e., there is no difference between cohomology and homology in our context (see Theorem~\ref{thm:cohomology} for a proof and Section~\ref{sec:caveats} for caveats).

\section{Coboundary operators and Hodge Laplacians on graphs}\label{sec:graph}

The way we discussed cohomology and Hodge theory in Section~\ref{sec:ped} relies solely on the linear algebra of operators satisfying $AB=0$; this is the `algebraic side' of the subject. There is also a `topological side' that is just one step away, obtained by imposing the requirement that $A$ and $B$ be \emph{coboundary operators}. Readers may remember from vector calculus identities like $\curl \grad = 0$ or $\dive \curl = 0$ in $\mathbb{R}^3$ --- these are in fact pertinent examples of when $AB = 0$ naturally arises; and as we will soon see, $\dive, \grad, \curl$ are our most basic examples of coboundary operators. Restricting our choices of $A$ and $B$ to coboundary operators allows us to attach topological meanings to the objects in Section~\ref{sec:ped}.

Just like the last section requires nothing more than elementary linear algebra, this section requires nothing more than elementary graph theory. We will discuss simplicial complexes (family of subsets of vertices), cochains (functions on a graph), and coboundary operators (operators on functions on a graph) --- all in the context of the simplest type of graphs: undirected, unweighted, no loops, no multiple edges.

\subsection{Graphs}\label{sec:cliques}

Let $G = (V,E)$ be an undirected graph where $V\coloneqq \{1,\dots,n\}$ is a finite set of vertices and $E \subseteq \binom{V}{2}$ is the set\footnote{Henceforth $\binom{V}{k}$ denotes the set of all $k$-element subsets of $V$. In particular $E$ is not a multiset since our graphs have no loops nor multiple edges.} of edges. Note that once we have specified $G$, we  automatically get \textit{cliques} of higher order --- for example, the set of triangles or \textit{$3$-cliques}  $T \subseteq \binom{V}{3}$ is defined by
\[
\{i,j,k\} \in T \quad \text{iff} \quad \{i,j\}, \{ i, k \}, \{j,k\}  \in E.
\]
More generally the set of \textit{$k$-cliques} $K_k(G) \subseteq \binom{V}{k}$ is defined by
\[
\{i_1,\dots, i_k \} \in K_k(G) \quad \text{iff} \quad \{i_p,i_q\} \in E \; \text{for all} \; 1 \le p < q \le k,
\]
i.e., all pairs of vertices in $\{i_1,\dots, i_k \}$ are in $E$. Clearly, specifying $V$ and $E$ uniquely determines $K_k(G)$ for all $k \ge 3$. In particular we have
\[
K_1(G) = V, \quad K_2(G) = E, \quad K_3(G) = T.
\]

In topological parlance, a nonempty family $K$ of finite subsets of a set $V$ is  called a \textit{simplicial complex} (more accurately, an abstract simplicial complex) if for any set $S$ in $K$, every $S' \subseteq S$ also belongs to $K$. Evidently the set comprising all cliques of a graph $G$,
\[
K(G) \coloneqq \bigcup\nolimits_{k=1}^{\omega(G)} K_k(G),
\]
is a simplicial complex and is called the \textit{clique complex} of the graph $G$. The \textit{clique number} $\omega(G)$ is the number of vertices in a  largest clique of $G$.

There are abstract simplicial complexes that are not clique complexes of graphs. For example, we may just exclude cliques of larger sizes --- $\bigcup\nolimits_{k=1}^{m} K_k(G)$ is still an abstract simplicial complex for any $m =3,\dots,\omega(G)-1,$ but it would not in general be a clique complex of a graph.

\subsection{Functions on a graph}\label{sec:cochian}

Given a graph $G = (V,E)$, we will consider real-valued functions on its vertices $f: V \to \mathbb{R}$. We will also consider real-valued functions on $E$ and $T$ and $K_k(G)$ in general but we shall require them to be \textit{alternating}. By an alternating function on $E$, we mean one of the form $X : V \times V \to \mathbb{R}$ where
\[
X(i,j) = - X(j,i)
\]
for all $\{i,j\} \in E$, and
\[
X(i,j) = 0
\]
for all $\{i,j\} \not\in E$. An alternating function on $T$ is one of the form $\Phi :V \times V \times V \to \mathbb{R}$ where
\[
\Phi(i,j,k) =  \Phi (j,k,i) = \Phi (k,i,j)  = -\Phi(j,i,k) = - \Phi (i,k,j) = - \Phi(k,j,i) 
\]
for all $\{i,j,k\} \in T$, and
\[
\Phi(i,j,k) = 0
\]
for all $\{i,j,k\} \not\in T$. More generally, an alternating function is one where permutation of its arguments has the effect of changing its value by the sign of the permutation, as we will see in \eqref{eq:altfn}.

In topological parlance, the functions $f, X, \Phi$ are called $0$-, $1$-,  $2$-\textit{cochains}. These are discrete analogues of \textit{differential forms} on manifolds \cite{W}. Those who prefer to view them as such often refer to cochains as \textit{discrete differential forms} \cite{DHLM, DLM, Hira} and in which case, $f, X, \Phi$ are $0$-, $1$-,  $2$-\textit{forms} on $G$.

Observe that a $1$-cochain $X$ is completely specified by the values it takes on the set $\{(i,j) : i < j\}$ and a $2$-cochain $\Phi$ is completely specified by the values it takes on the set $\{(i,j,k) : i < j < k \}$. We may equip the spaces of cochains with inner products, for example, as weighted sums
\begin{gather}\label{eq:inner}
\langle f, g \rangle_V = \sum\nolimits_{i=1}^n w_i f(i)g(i), \qquad \langle X,Y\rangle_E = \sum\nolimits_{i < j} w_{ij} X(i,j)Y(i,j), \\ \langle \Phi, \Psi \rangle_T = \sum\nolimits_{i < j < k} w_{ijk} \Phi(i,j,k) \Psi(i,j,k), \notag
\end{gather}
where the weights $w_i, w_{ij}, w_{ijk} $ are given by any positive values invariant under arbitrary permutation of indices. When they take the constant value $1$, we call it the \textit{standard $L^2$-inner product}. By summing only over the sets\footnote{Our choice is arbitrary; any set that includes each edge or triangle exactly once would also serve the purpose. Each such choice corresponds to a choice of direction or orientation on the elements of $E$ or $T$.} $\{(i,j) : i < j\}$ and $\{(i,j,k) : i < j < k \}$, we count each edge or triangle exactly once in the inner products.

We will denote the Hilbert spaces of $0$-, $1$-, and $2$-cochains as $L^2(V)$, $L^2_\wedge(E)$, $L^2_\wedge(T)$ respectively. The subscript $\wedge$ is intended to indicate `alternating'. Note that $L^2_\wedge(V) = L^2(V)$ since for a function of one argument, being alternating is a vacuous property.  We set $L^2_\wedge(\varnothing ) \coloneqq  \{0 \}$ by convention. The $L^2$ prefix is merely to indicate the presence of an inner product. $L^2$-integrability is never an issue since the spaces $V$, $E$, $T$ are finite sets; e.g., any  function $f:V\to \mathbb{R}$ will be an element of $L^2(V)$ as $\lVert f \rVert_V^2 = \langle f, f \rangle_V$ is just a finite sum and thus finite.

The elements of $L^2_\wedge(E)$ (i.e., $1$-cochains) are well-known in graph theory, often called \textit{edge flows}. While the graphs in this article are always undirected and unweighted, a directed graph is simply one equipped with a choice of edge flow $X \in L^2_\wedge(E)$  --- an undirected edge $\{ i,j\} \in E$ becomes a directed edge $(i,j)$ if $X(i,j) >0$ or $(j,i)$ if $X(i,j) < 0$; and the magnitude of $X(i,j)$ may be taken as the weight of that directed edge. So $ L^2_\wedge(E)$ encodes all weighted directed graphs that have the same underlying undirected graph structure.

\subsection{Operators on functions on a graph}\label{sec:coboundary}

We  consider the graph-theoretic analogues of $\grad$, $\curl$, $\dive$ in multivariate calculus. The \textit{gradient} is the linear operator $\grad : L^2(V) \to L^2_\wedge(E)$ defined by
\[
(\grad f)(i,j) = f(j) -f(i)
\]
for all $\{i,j\} \in E$ and zero otherwise. The \textit{curl} is the linear operator $\curl : L^2_\wedge(E) \to L^2_\wedge(T)$ defined by
\[
(\curl X)(i,j,k) = X(i,j) + X(j,k) + X(k,i)
\]
for all $\{i,j,k\} \in T$ and zero otherwise. The \textit{divergence} is the linear operator $\dive : L^2_\wedge(E) \to L^2(V)$ defined by
\[
(\dive X) (i) = \sum_{j=1}^n \frac{w_{ij}}{w_i}  X(i,j)
\]
for all $i \in V$.

Using these, we may construct other linear operators, notably the well-known \textit{graph Laplacian}, the operator $\Delta_0 : L^2(V) \to L^2(V)$ defined by
\[
\Delta_0 = -\dive  \grad,
\]
which is a graph-theoretic analogue of the Laplace operator (see Lemma~\ref{lem:gl}). Less well-known is the \textit{graph Helmholtzian} \cite{JLYY}, the operator $\Delta_1 : L^2_\wedge(E) \to L^2_\wedge(E)$ defined by
\[
\Delta_1 = - \grad  \dive + \curl^*  \curl,
\]
which is a graph-theoretic analogue of the vector Laplacian. We may derive  (see Lemma~\ref{lem:curl*}) an expression for the adjoint of the curl operator,  $\curl^* : L^2_\wedge(T) \to L^2_\wedge(E)$ is given by
\[
(\curl^* \Phi)(i,j) =  \sum_{k=1}^n \frac{w_{ijk}}{w_{ij}}  \Phi(i,j,k)
\]
for all $\{i,j\} \in E$ and zero otherwise.

The gradient and curl are special cases of \textit{coboundary operators}, discrete analogues of \textit{exterior derivatives}, while the graph Laplacian and Helmholtzian are special cases of \textit{Hodge Laplacians}. 

The matrices $A$ and $B$ that we left unspecified in Section~\ref{sec:ped} are coboundary operators. We may show (see~Theorem~\ref{thm:fund}) that the composition
\begin{equation}\label{eq:cg}
\curl \grad  = 0
\end{equation}
and so setting $A = \curl$ and $B = \grad$ gives us \eqref{eq:null}.

Note that  divergence and gradient are negative adjoints of each other:
\begin{equation}\label{eq:adj}
\dive  = -\grad^*,
\end{equation}
 (see Lemma~\ref{lem:grad*}). With this we get $\Delta_1 = A^*A + BB^*$ as in \eqref{eq:hodge}.

If the inner products on $L^2(V) $ and $L^2_\wedge(E)$ are taken to be the standard $L^2$-inner products, then \eqref{eq:adj} gives $\Delta_0 = B^*B = B^\tp  B$, a well-known expression of the graph Laplacian in terms of vertex-edge incidence matrix $B$. The operators
\[
\grad^* \grad:  L^2(V) \to L^2(V) \quad \text{and} \quad \curl^*  \curl :  L^2_\wedge(E) \to  L^2_\wedge(E)
\]
are sometimes called the \textit{vertex Laplacian} and \textit{edge Laplacian} respectively. The vertex Laplacian is of course just the usual graph Laplacian but note that the edge Laplacian is not the same as the graph Helmholtzian.

\paragraph{Physics interlude}  Take the standard $L^2$-inner products on $L^2(V) $ and $L^2_\wedge(E)$, the divergence of an edge flow at a vertex $i \in V$ may be interpreted as the \textit{netflow},
\begin{equation}\label{eq:inoutflow}
(\dive X) (i) = (\inflow X)(i) - (\outflow X) (i),
\end{equation}
where \textit{inflow} and \textit{outflow} are defined respectively for any $X \in L^2_\wedge(E)$ and any $i \in V$ as
\[
(\inflow X) (i) = \sum\nolimits_{j: X(i,j) < 0} X(i,j),\quad (\outflow X) (i) = \sum\nolimits_{j: X(i,j) > 0} X(i,j).
\]
Sometimes the terms \textit{incoming flux}, \textit{outgoing flux}, \textit{total flux} are used instead of inflow, outflow, netflow.
Figure~{\sc\ref{fig:eg2}} shows two \textit{divergence-free} edge flows, i.e., inflow equals outflow at every vertex.

Let $X \in L^2_\wedge(E)$. A vertex $i \in V$  is called a \textit{sink} of $X$ if $X(i,j) < 0$ for every neighbor $\{i,j \} \in E$ of $i$.  Likewise a vertex $i \in V$ is called a \textit{source} of $X$ if $X(i,j) > 0$ for every neighbor $\{i,j \} \in E$ of $i$. In general, an edge flow may not have any source or sink\footnote{There is an alternative convention that defines $i \in V$ to be a source (resp.\ sink) of $X$ as long as $\dive X(i) > 0$ (resp.\ $\dive X(i) < 0$) but our definition is much more restrictive.}  but if it can be written as
\[
X = - \grad f
\]
for some $f \in L^2(V)$, often called a \textit{potential} function, then $X$ will have the property of flowing from sources (local maxima of $f$) to sinks (local minima of $f$).

\begin{example}\label{eg:curl}
We highlight a common pitfall regarding curl on a graph. Consider $C_3$ and $C_4$, the \textit{cycle graphs} on three and four vertices in Figure~\ref{fig:eg1}.
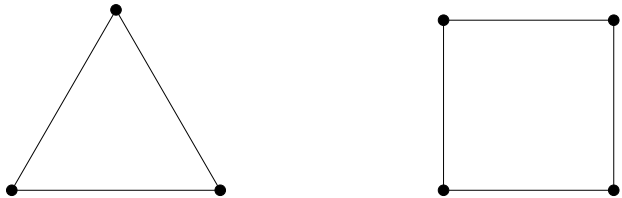
\begin{figure}[h]
\SetGraphUnit{4}
\SetVertexSimple[MinSize = 10pt]
\centering
\scalebox{0.4}{
\begin{tikzpicture}[rotate=90] 
  \Vertices[NoLabel]{circle}{A,B,C}
  \Edges(A,B,C,A)
\end{tikzpicture}}
\hspace{1in}
\scalebox{0.4}{
\begin{tikzpicture}[rotate=45] 
  \Vertices[NoLabel]{circle}{A,B,C,D}
  \Edges(A,B,C,D,A)
\end{tikzpicture}}
\caption{Cycle graphs $C_3$ (\textit{left}) and $C_4$ (\textit{right}).}
\label{fig:eg1}
\end{figure}

Number the vertices and consider the edge flows in Figure~\ref{fig:eg2}. What are the values of their curl? For the one on $C_3$, the answer is $2+2+2 = 6$ as expected. But the answer for the one on $C_4$ is not  $2+2+2+2 = 8$, it is in fact $0$.
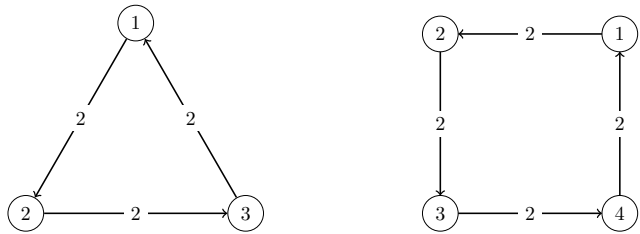
\begin{figure}[h]
\centering
\scalebox{0.75}{
\GraphInit[vstyle=Dijkstra]
\begin{tikzpicture}[scale=2.25,rotate=90]
  \Vertices{circle}{1,2,3}
  \tikzset{EdgeStyle/.style = {->}}
  \Edge[label=$2$](1)(2)
  \Edge[label=$2$](2)(3)
  \Edge[label=$2$](3)(1)
\end{tikzpicture}}
\hspace{0.7in}
\scalebox{0.75}{
\begin{tikzpicture}[scale=2.25,rotate=45]
  \Vertices{circle}{1,2,3,4}
  \tikzset{EdgeStyle/.style = {->}}
  \Edge[label=$2$](1)(2)
  \Edge[label=$2$](2)(3)
  \Edge[label=$2$](3)(4)
  \Edge[label=$2$](4)(1)
\end{tikzpicture}}
\caption{Edge flows on $C_3$ (\textit{left}) and $C_4$ (\textit{right}).}
\label{fig:eg2}
\end{figure}

The second answer may not agree with a physicist's intuitive idea of curl and is a departure from what one would expect in the continuous case. However it is what follows from definition. Let $X \in L^2_\wedge(E(C_3))$ denote the edge flow on $C_3$ in Figure~\ref{fig:eg2}. It is given by
\[
X(1,2) = X(2,3) = X(3,1) =2 = -X(2,1) = -X(3,2) = -X(1,3),
\]
and the curl evaluated at $\{1,2,3\} \in T(C_3)$ is by definition indeed
\[
(\curl X)(1,2,3) = X(1,2) + X(2,3) + X(3,1) = 6.
\]
On the other hand $C_4$ has no $3$-cliques and so $T(C_4) = \varnothing$. By convention $L^2_\wedge(\varnothing ) = \{0 \}$. Hence $\curl : L^2_\wedge(E(C_4)) \to L^2_\wedge(T(C_4)) $ must have $\curl X = 0$ for all $X \in L^2_\wedge(E(C_4)) $ and in particular for the edge flow on the right of Figure~\ref{fig:eg2}.
\end{example}

\begin{table}[h!]
\centering
\caption{Electrodynamics/fluid dynamics jargons}
\label{tab:terms3}
\vspace*{-1.5ex}
\renewcommand{\arraystretch}{1.25}
\begin{tabular}{|l|l|l|}
\hline
\textsc{name} & \textsc{meaning} &\textsc{alternate name(s)}\\
\hline
divergence-free & element of $\ker(\dive)$ & solenoidal \\
curl-free & element of $\ker(\curl)$ & irrotational  \\
vorticity & element of $\im(\curl^*)$ & vector potential\\
conservative & element of $\im (\grad)$ & potential flow \\
harmonic & element of $\ker (\Delta_1)$ &\\
anharmonic & element of $\im (\Delta_1)$ &\\
scalar field & element of $L^2(V)$ & scalar potential\\
vector field & element of $L^2_\wedge(E)$ &\\
\hline
\end{tabular}
\end{table}

\subsection{Helmholtz decomposition for graphs}\label{sec:Helm}

The usual graph Laplacian $\Delta_0 : L^2(V) \to L^2(V)$,
\[
\Delta_0 = -\dive  \grad = \grad^* \grad,
\]
has been an enormously useful construct  in the context of spectral graph theory \cite{Chung, Spielman}, with great impact on many areas. We have nothing more to add except to remark that the Hodge decomposition associated with the graph Laplacian $\Delta_0$ is given by \eqref{eq:decomp0},
\[
L^2(V) = \ker(\Delta_0) \oplus \im(\dive).
\]
Recall from \eqref{eq:normal} that $\ker(\Delta_0) = \ker (\grad)$. Since $\grad f = 0$ iff $f$ is piecewise constant, i.e., constant on each connected component of $G$, the number $\beta_0 (G) \coloneqq \dim \ker (\Delta_0)$ counts the number of connected component of $G$ --- a well known fact in graph theory.

The Hodge decomposition associated with the graph Helmholtzian  $\Delta_1 : L^2_\wedge(E) \to L^2_\wedge(E)$,
\[
\Delta_1 =  - \grad  \dive + \curl^*  \curl = \grad  \grad^* + \curl^*  \curl.
\]
is called the \textit{Helmholtz decomposition}. It says that the space of edge flows admits an orthogonal decomposition into subspaces
\begin{equation}\label{eq:helm1}
L^2_\wedge(E) = \rlap{$\overbrace{\phantom{\im(\curl^*) \oplus \ker(\Delta_1)}}^{\ker(\dive)}$}\im(\curl^*) \oplus \underbrace{\ker(\Delta_1) \oplus \im(\grad)}_{\ker(\curl)},
\end{equation}
and moreover the three subspaces are related via
\begin{equation}\label{eq:helm2}
\ker(\Delta_1)=\ker(\curl)\cap\ker(\dive),\qquad \im(\Delta_1)= \im(\curl^*) \oplus \im(\grad).
\end{equation}
In particular, the first equation is a discrete analogue of the statement ``a vector field is curl-free and divergence-free if and only if it is a harmonic vector field.'' 

There is nothing really special here --- as we saw in Section~\ref{sec:hodge}, any matrices $A$ and $B$ satisfying $AB =0 $ would give such a decomposition: \eqref{eq:helm1} and \eqref{eq:helm2} are indeed just \eqref{eq:decomp}, \eqref{eq:inter}, and \eqref{eq:sum} where $A = \curl$ and $B = \grad$. This is however a case that yields the most interesting applications (see Section~\ref{sec:app} and \cite{CMOP,JLYY}).

\begin{example}[Beautiful Mind problem on graphs]
This is a discrete analogue of a problem\footnote{Due to Dave Bayer \cite{butler}. See Figure~\ref{fig:beau}.} that appeared in a blockbuster movie: Let $G =(V,E)$ be a graph. If $X \in L^2_\wedge(E)$ is curl-free, then is it true that $X$ is a gradient? In other words, if $X \in \ker(\curl)$, must it also be in $\im(\grad)$?
Clearly the converse always holds by \eqref{eq:cg} but from \eqref{eq:helm1}, we know that
\begin{equation}\label{eq:bm}
\ker(\curl) = \ker(\Delta_1) \oplus \im(\grad)
\end{equation}
and so it is not surprising that the answer is generally no. We would like to describe a family of graphs for which the answer is yes.
\begin{figure}[h]
\centering
\includegraphics[clip = true, trim = 35ex 15ex 0 10ex, width=\linewidth]{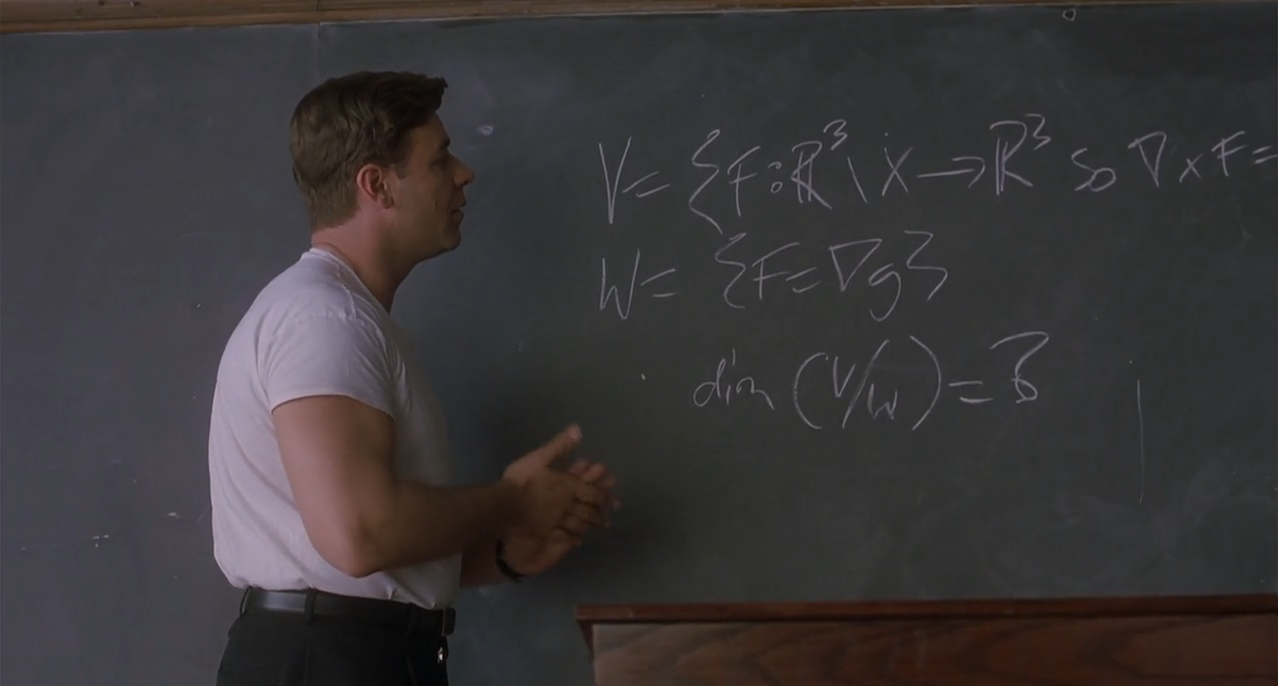}
\caption{Problem from \textit{A Beautiful Mind}: $V = \{ F : \mathbb{R}^3 \setminus X \to \mathbb{R}^3 \text{ so } \nabla \times F = 0 \}$, $W = \{ F = \nabla g \}$, $\dim(V/W) =\ ?$}
\label{fig:beau}
\end{figure}

The edge flow $X \in L^2_\wedge (E (C_4))$ on the right of Figure~\ref{fig:eg2} is an example of one that is curl-free but not a gradient. It is trivially curl-free since $T(C_4) = \varnothing$. It is not a gradient since if $X = \grad f$, then
\[
f(2) - f(1) = 2,\quad f(3) - f(2) = 2,\quad f(4) - f(3) = 2,\quad f(1) - f(4) = 2,
\]
and summing them  gives `$0 = 8$' --- a contradiction. Note that $X$ is also divergence-free by \eqref{eq:inoutflow} since $\inflow  X = \outflow  X$. It is therefore harmonic by \eqref{eq:helm2}, i.e., $X \in \ker(\Delta_1)$ as expected.

Every divergence-free edge flow on $C_4$ must be of the same form as $X$, taking constant value on all edges or otherwise we would not have $\inflow X = \outflow X$. Since all edge flows on $C_4$ are automatically curl-free, $\ker (\Delta_1)=\ker (\dive)$ and is given by the set of all constant multiples of $X$. The number
\[
\beta_1(G) = \dim \ker (\Delta_1)
\]
counts the number of `$1$-dimensional holes' of $G$ and in this case we see that indeed $\beta_1(C_4) = 1$. To be a bit more precise, the `$1$-dimensional holes' are the regions that remain uncovered after the cliques are filled in.

We now turn our attention to the contrasting case of $C_3$. Looking at Figure~\ref{fig:eg1}, it may seem that $C_3$ also has a `$1$-dimensional hole' as in $C_4$ but this is a fallacy --- holes bounded by triangles are not regarded as holes in our framework.

For $C_3$ it is in fact true that every curl-free edge flow is a gradient. To see this, note that as in the case of $C_4$, any divergence-free $X \in  L^2_\wedge (E (C_3))$ must be constant on all edges and so 
\[
(\curl X)(1,2,3) = X(1,2) + X(2,3) + X(3,1) = c + c + c = 3c,
\]
for some $c \in \mathbb{R}$. If a divergence-free $X$ is also curl-free, then $c = 0$ and so $X = 0$. Hence for $C_3$, $\ker(\Delta_1) = \{0\}$ by \eqref{eq:bm} and $\ker(\curl) = \im(\grad)$ by \eqref{eq:helm2}. It also follows that $\beta_1(C_3) = 0$ and so $C_3$ has no `$1$-dimensional hole'.

What we have illustrated with $C_3$ and $C_4$ extends to any arbitrary graph. A moment's thought would reveal that the property $\beta_1(G) = 0$ is satisfied by any \textit{chordal graph}, i.e., one for which every cycle subgraph of four or more vertices has a \textit{chord}, an edge that connects two vertices of the cycle subgraph but that is not part of the cycle subgraph. Equivalently, a chordal graph is one where every chordless cycle subgraph is $C_3$.
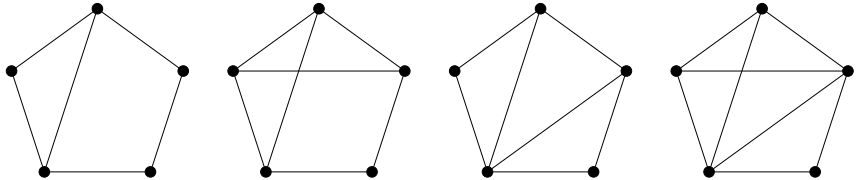
\begin{figure}[h]
\SetGraphUnit{3}
\SetVertexSimple[MinSize = 10pt]
\centering
\scalebox{0.4}{
\begin{tikzpicture}[rotate = 18]
  \Vertices{circle}{A,B,C,D,E}
  \Edges(A,B,C,D,E,A)
  \Edges(B,D)
\end{tikzpicture}}
\quad
\scalebox{0.4}{
\begin{tikzpicture}[rotate = 18]
  \Vertices{circle}{A,B,C,D,E}
  \Edges(A,B,C,D,E,A)
  \Edges(B,D)
  \Edges(A,C)
\end{tikzpicture}}
\quad
\scalebox{0.4}{
\begin{tikzpicture}[rotate = 18]
  \Vertices{circle}{A,B,C,D,E}
  \Edges(A,B,C,D,E,A)
  \Edges(B,D)
  \Edges(A,D)
\end{tikzpicture}}
\quad
\scalebox{0.4}{
\begin{tikzpicture}[rotate = 18]
  \Vertices{circle}{A,B,C,D,E}
  \Edges(A,B,C,D,E,A)
  \Edges(B,D)
  \Edges(C,A,D)
\end{tikzpicture}}
\caption{Left two graphs: not chordal. Right two graphs: chordal.}
\label{fig:eg3}
\end{figure}
\end{example}

\section{Higher order}\label{sec:HO}

We expect the case of alternating functions on edges, i.e., $k=1$, discussed in Section~\ref{sec:graph} to be the most useful in applications. However for completeness and since it is no more difficult to generalize to $k > 1$, we provide the analogue of  Section~\ref{sec:graph} for arbitrary $k$ here.

\subsection{Higher-order cochains}

Let $K(G)$ be the clique complex of a graph $G=(V,E)$ as defined in Section~\ref{sec:cliques}. We will write $K_k =K_k(G)$ for simplicity.

A $k$-\textit{cochain} (or $k$-form) is an alternating function on $K_{k+1}$, or more specifically, $f: V \times \dots \times V \rightarrow \mathbb{R}$ where
\begin{equation}\label{eq:altfn}
f(i_{\sigma(0)},\dots,i_{\sigma(k)})=\sgn (\sigma)f(i_{0},\dots,i_{k})
\end{equation}
for all $\{i_{0},\dots,i_{k}\}\in K_{k+1}$ and all  $\sigma \in\mathfrak{S}_{k+1}$, the symmetric group of permutations on $\{0,\dots,k\}$. We set $f(i_{0},\dots,i_{k}) = 0$ if $\{i_{0},\dots,i_{k}\}\not\in K_{k+1}$.

Again, we may put an inner product on $k$-cochains,
\[
\langle f,g\rangle =\sum\nolimits_{i_0 <\dots <i_k} w_{i_0\cdots i_k}f(i_{0},\dots,i_{k}) g(i_{0},\dots,i_{k}),
\]
with any positive weights satisfying $w_{i_{\sigma(0)} \cdots i_{\sigma(k)}} = w_{i_0\cdots i_k}$ for all $\sigma \in \mathfrak{S}_{k+1}$.

We denote the resulting Hilbert space by $L^2_\wedge(K_{k+1})$. This is a subspace of $L^2(\Wedge^{k+1} V)$, the space of alternating functions with $k+1$ arguments in $V$. Clearly,
\[
\dim L^2_\wedge(K_{k+1}) = \# K_{k+1}.
\]

A word of caution regarding the terminology: a $k$-cochain is a function on a $(k+1)$-clique and has $k+1$ arguments. The reason is due to the different naming conventions ---  a $(k+1)$-clique in graph theory is called a $k$-simplex in topology. In topological lingo, a vertex is a $0$-simplex, an edge a $1$-simplex, a triangle a $2$-simplex, a tetrahedron a $3$-simplex.

\subsection{Higher-order coboundary operators}\label{sec:HO2}

The $k$-\textit{coboundary operators} $\delta_{k}: L^2_\wedge(K_{k})\rightarrow L^2_\wedge(K_{k+1})$ are defined by
\begin{equation}\label{eq:cob}
(\delta_{k}f)(i_{0},\dots,i_{k+1})=\sum_{j=0}^{k+1}(-1)^{j}f(i_{0},\dots,i_{j-1},i_{j+1},\dots,i_{k+1}),
\end{equation}
for $k = 0,1,2,\dots.$ Readers familiar with differential forms may find it illuminating to think of coboundary operators as discrete analogues of exterior derivatives. Note that $f$ is a function with $k+1$ arguments but $\delta_k f $ is a function with $k+2$ arguments. A convenient and often-used notation is to put a carat over the omitted argument
\begin{equation}\label{eq:carat}
f(i_{0},\dots,\widehat{i}_{j},\dots,i_{k+1}) \coloneqq  f(i_{0},\dots,i_{j-1},i_{j+1},\dots,i_{k+1}).
\end{equation}

The crucial relation $AB=0$ in Section~\ref{sec:ped} is in fact
\begin{equation}\label{eq:fot2}
\delta_{k} \delta_{k-1} = 0,
\end{equation}
which  may be verified using \eqref{eq:cob} (see Theorem~\ref{thm:fund}). The equation \eqref{eq:fot2} is often verbalized as ``the coboundary of a coboundary is zero.'' It generalizes \eqref{eq:cg} and is sometimes called the \textit{fundamental theorem of topology}.

As in Section~\ref{sec:coho}, \eqref{eq:fot2} is equivalent to saying that $\im(\delta_{k-1}) $ is a subspace of $\ker(\delta_{k})$. We define the $k$th \textit{cohomology group} of $G$ to be the quotient vector space
\begin{equation}\label{eq:cohogp}
H^k(G) = \ker(\delta_{k})/\im(\delta_{k-1}),
\end{equation}
for $k =1,2,\dots, \omega(G) - 1$.

To keep track of the coboundary operators, it is customary to assemble them into a sequence of maps written in the form
\[
L^2_\wedge(K_0) \overset{\delta_0}{\longrightarrow}  L^2_\wedge(K_1) \overset{\delta_1}{\longrightarrow}\cdots\overset{\delta_{k-1}}{\longrightarrow}   L^2_\wedge(K_{k}) 
\overset{\delta_{k}}{\longrightarrow}  L^2_\wedge(K_{k+1})\overset{\delta_{k+1}}{\longrightarrow} \cdots \overset{\delta_{\omega}}{\longrightarrow}  L^2_\wedge(K_{\omega}).
\]
This sequence is called a \textit{cochain complex}. It is said to be \textit{exact} if $\im(\delta_{k-1}) = \ker(\delta_{k})$ or, equivalently, $H^k(G)  = \{0\}$, for all $k =1,2,\dots,\omega(G)-1$.

For $k = 1$, we get $\delta_0= \grad$, $\delta_1 = \curl$, and the first two terms of the cochain complex are
\[
 L^2(V) \xrightarrow{\grad}  L^2_\wedge(E) \xrightarrow{\curl}  L^2_\wedge(T).
\]

\subsection{Hodge theory}\label{sec:HO3}

The \textit{Hodge $k$-Laplacian} $\Delta_{k}:L^2_\wedge(K_{k}) \to L^2_\wedge(K_{k})$ is defined as
\[
\Delta_{k} = \delta_{k-1} \delta_{k-1}^{\ast}+ \delta_{k}^{\ast}\delta_{k}.
\]
We call $f \in L^2_\wedge(K_{k}) $ a \textit{harmonic $k$-cochain} if it satisfies the Laplace equation
\[
\Delta_k f = 0.
\]

Applying the results in Section~\ref{sec:hodge} with $A = \delta_{k}$ and $B = \delta_{k-1}$, we obtain the unique representation of cohomology classes as harmonic cochains
\[
H^k(G) =\ker(\delta_k)/\im(\delta_{k-1}) \cong \ker(\delta_{k})\cap\ker(\delta_{k-1}^{\ast}) = \ker(\Delta_k),
\]
as well as the Hodge decomposition
\begin{equation}\label{eq:hohd}
L^2_\wedge(K_{k}) = \rlap{$\overbrace{\phantom{\im(\delta_{k}^*) \oplus \ker(\Delta_k)}}^{\ker(\delta_{k-1}^*)}$}\im(\delta_{k}^*) \oplus \underbrace{\ker(\Delta_k) \oplus \im(\delta_{k-1})}_{\ker(\delta_{k})},
\end{equation}
and the relation
\[
\im(\Delta_k) = \im(\delta_k^*) \oplus \im(\delta_{k-1}).
\]

\begin{example}[Hearing the shape of a graph]\label{eg:gip}
Two undirected graphs $G$ and $H$ on $n$ vertices are said to be \textit{isomorphic} if they are essentially the same graph up to relabeling of vertices. The \textit{graph isomorphism problem}, an open problem in computer science, asks whether there is a polynomial-time algorithm\footnote{An astounding recent result of Babai \cite{babai2} is that there is a \emph{quasipolynomial}-time algorithm.}  for deciding if two given graphs are isomorphic \cite{babai}.  Clearly two isomorphic graphs must be \textit{isospectral} in the sense that the  eigenvalues (ordered and counted with multiplicities) of their graph Laplacians are equal,
\[
\lambda_i(\Delta_0(G)) = \lambda_i(\Delta_0(H)), \quad i =1,\dots,n,
\]
a condition that can be checked in polynomial time. Not surprisingly, the converse --- the graph theoretic analogue of Kac's famous problem \cite{Kac} --- is not true, or we would have been able to determine graph isomorphism in polynomial time. We should mention that there are several definitions of isospectral graphs, in terms of the adjacency matrix, graph Laplacian, normalized Laplacian, signless Laplacian, etc; see \cite{BG,GLS} for many interesting examples of nonisomorphic isospectral graphs.

The reader may perhaps wonder what happens if we impose the stronger requirement that the eigenvalues of all their higher-order Hodge $k$-Laplacians be  equal as well?
\[
\lambda_i(\Delta_k(G)) = \lambda_i(\Delta_k(H)), \quad i =1,\dots,n, \; k =0,\dots,m.
\]
For any $m\ge 1$, these indeed give a stronger set of sufficient conditions that can be checked in polynomial time. For example, the eigenvalues of $\Delta_0$ for the two graphs in Figure~\ref{fig:iso1} are $0, 0.76, 2, 3, 3, 5.24$ (all numbers rounded to two decimal figures). On the other hand, the eigenvalues of $\Delta_1$ are $0, 0.76, 2, 3, 3, 3, 5.24$ for the graph on the left and $0, 0, 0.76, 2, 3, 3, 5.24$ for the graph on the right, allowing us to conclude that they are not isomorphic. These calculations are included in Section~\ref{sec:cal}.
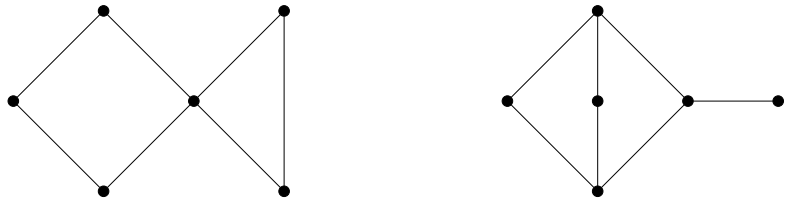
\begin{figure}[h]
\centering
\SetVertexSimple[MinSize = 10pt]
\scalebox{0.4}{
\begin{tikzpicture}
\SetGraphUnit{3}
\coordinate (O) at (0,0);
\WE(O){1} \NO(O){2} \EA(O){3} \SO(O){4}
\NOEA(3){5} \SOEA(3){6}
\Edges(1,2,3,5,6,3,4,1)
\end{tikzpicture}}
\hspace*{1in}
\scalebox{0.4}{
\begin{tikzpicture}
\SetGraphUnit{3}
\Vertex{6}
\WE(6){1} \NO(6){2} \EA(6){3} \SO(6){4}
\EA(3){5}
\Edges(4,1,2,3,5)
\Edges(3,4,6,2)
\end{tikzpicture}}
\caption{These graphs have isospectral Laplacians (Hodge $0$-Laplacians) but not Helmholtzians (Hodge $1$-Laplacians).}
\label{fig:iso1}
\end{figure}

Non-isomorphic graphs can nevertheless have isospectral Hodge Laplacians of all order. The two graphs in Figure~\ref{fig:iso2} are clearly non-isomorphic. Neither contains cliques of order higher than two, so their Hodge $k$-Laplacians are zero for all $k > 2$. We may check (see Section~\ref{sec:cal}) that  the first three Hodge Laplacians $\Delta_0$, $\Delta_1$, $\Delta_2$, of both graphs are isospectral.
\begin{figure}[h]
\centering
\SetVertexSimple[MinSize = 10pt]
\scalebox{0.4}{
\begin{tikzpicture}
\SetGraphUnit{3}
\Vertex{2}
\NO(2){1} \EA(2){3}
\EA(3){4}
\NOEA(4){5} \EA(4){6} \SOEA(4){7}
\Edges(2,1,3,2)
\Edges(3,4,6)
\Edges(4,5)
\Edges(4,7)
\end{tikzpicture}}
\hspace*{1in}
\scalebox{0.4}{
\begin{tikzpicture}
\SetGraphUnit{3}
\Vertex{2}
\NO(2){1} \EA(2){3}
\EA(3){4} \SOWE(3){6}
\NOEA(4){5} \SOEA(4){7}
\Edges(2,1,3,2)
\Edges(3,4)
\Edges(3,6)
\Edges(4,5)
\Edges(4,7)
\end{tikzpicture}}
\caption{Non-isormorphic graphs with isospectral Hodge $k$-Laplacians for all $k = 0, 1, 2,\dots.$}
\label{fig:iso2}
\end{figure}
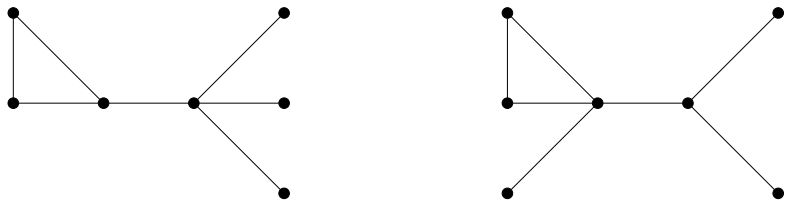

\end{example}

\begin{table}[h!]
\centering
\caption{Topological jargons (second pass)}
\label{tab:terms2}
\vspace*{-1.5ex}
\renewcommand{\arraystretch}{1.25}
\begin{tabular}{|l|l|}
\hline
\textsc{name} & \textsc{meaning}\\
\hline
coboundary maps & $\delta_{k}: L^2_\wedge(K_{k})\rightarrow L^2_\wedge(K_{k+1})$\\
cochains & elements $f \in L^2_\wedge(K_{k})$\\
cochain complex & $\cdots \longrightarrow L^2_\wedge(K_{k-1})  \overset{\delta_{k-1}}{\longrightarrow}  L^2_\wedge(K_{k}) \overset{\delta_{k}}{\longrightarrow}  L^2_\wedge(K_{k+1}) \longrightarrow  \cdots$\\
cocycles & elements of $\ker(\delta_{k})$\\
coboundaries & elements of $\im(\delta_{k-1})$\\
cohomology classes & elements of $ \ker(\delta_{k})/\im(\delta_{k-1})$\\
harmonic cochains & elements of $\ker(\Delta_{k} ) $\\
Betti numbers & $\dim \ker(\Delta_{k} ) $\\
Hodge Laplacians & $\Delta_{k} =\delta_{k-1} \delta_{k-1}^{\ast}+ \delta_{k}^{\ast}\delta_{k}$\\
$f$ is closed & $\delta_{k}f = 0$\\
$f$ is exact & $f = \delta_{k-1}g$ for some $g \in L^2_\wedge(K_{k-1})$\\
$f$ is coclosed & $\delta_{k-1}^*f = 0$\\
$f$ is coexact & $f = \delta_{k}^*h$ for some $h \in L^2_\wedge(K_{k+1})$\\
$f$ is harmonic & $\Delta_k f =0 $\\
\hline
\end{tabular}
\end{table}

\section{Detailed proofs and calculations}\label{sec:proofs}
\allowdisplaybreaks

In this section, we  provide  proofs of the linear algebraic facts in Section~\ref{sec:ped},  verify  various claims in Sections~\ref{sec:graph} and \ref{sec:HO}, and work out the details of Example~\ref{eg:gip}. 

\subsection{Linear Algebra over $\mathbb{R}$}\label{sec:NLA}

We provide routine proofs for some linear algebraic facts that we have used freely in Section~\ref{sec:ped}. We will work over $\mathbb{R}$ for convenience but every statement in Theorems~\ref{thm:fredholm}, \ref{thm:hodge}, \ref{thm:cohomology} extends to any subfield of $\mathbb{C}$.

\begin{theorem}\label{thm:fredholm}
Let $A \in \mathbb{R}^{m \times n}$. Then
\begin{dingautolist}{192}
\item $\ker(A^* A) = \ker(A)$,
\item $\im(A^* A) = \im(A^*)$,
\item $\ker(A^*) = \im(A)^\perp$,
\item $\im(A^*) = \ker(A)^\perp$,
\item $\mathbb{R}^n = \ker(A) \oplus \im(A^*)$.
\end{dingautolist}
\end{theorem}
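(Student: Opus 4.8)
The plan is to prove the five identities in the order $\ker(A^*)=\im(A)^\perp$, then $\im(A^*)=\ker(A)^\perp$, then $\mathbb{R}^n=\ker(A)\oplus\im(A^*)$, then $\ker(A^*A)=\ker(A)$, and finally $\im(A^*A)=\im(A^*)$. The later identities follow formally from the earlier ones once two basic facts are in hand: the defining adjoint relation $\langle Ax,y\rangle=\langle x,A^*y\rangle$, and the positive-definiteness of the standard inner product on $\mathbb{R}^n$.

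First I would establish $\ker(A^*)=\im(A)^\perp$ directly from the adjoint relation: a vector $y$ lies in $\ker(A^*)$ iff $\langle A^*y,x\rangle=0$ for every $x\in\mathbb{R}^n$ (the inner product being nondegenerate), which by adjointness is the same as $\langle y,Ax\rangle=0$ for every $x$, i.e.\ $y\perp\im(A)$. Substituting $A^*$ for $A$ gives $\ker(A)=\im(A^*)^\perp$; taking orthogonal complements of both sides and using $(W^\perp)^\perp=W$ for a subspace $W$ of a finite-dimensional inner product space yields $\im(A^*)=\ker(A)^\perp$. The orthogonal direct-sum decomposition $\mathbb{R}^n=\ker(A)\oplus\im(A^*)$ is then immediate from the general fact $\mathbb{R}^n=W\oplus W^\perp$ applied with $W=\ker(A)$.

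Next I would prove $\ker(A^*A)=\ker(A)$. The inclusion $\ker(A)\subseteq\ker(A^*A)$ is obvious. For the reverse, if $A^*Ax=0$ then $0=\langle A^*Ax,x\rangle=\langle Ax,Ax\rangle=\lVert Ax\rVert^2$, so $Ax=0$; this is the one step that genuinely uses positive-definiteness of the inner product (and, as noted in Section~\ref{sec:caveats}, is exactly where the argument would fail in positive characteristic). Finally, $\im(A^*A)=\im(A^*)$ follows by duality: since $A^*A$ is Hermitian, applying $\ker((\,\cdot\,)^*)=\im(\,\cdot\,)^\perp$ to it gives $\im(A^*A)^\perp=\ker(A^*A)$, which equals $\ker(A)$ by the previous step, which equals $\im(A^*)^\perp$ by taking complements in $\im(A^*)=\ker(A)^\perp$; hence $\im(A^*A)^\perp=\im(A^*)^\perp$, and taking complements once more gives $\im(A^*A)=\im(A^*)$.

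I do not expect any real obstacle — the whole statement amounts to the adjoint identity plus finite-dimensional orthogonal-complement bookkeeping. The only thing to watch is not to prove $\im(A^*A)=\im(A^*)$ circularly; routing it through the Hermitian-ness of $A^*A$ and the already-established identities keeps it clean. (An alternative argument for that last identity is a rank count, $\rank(A^*A)=n-\dim\ker(A^*A)=n-\dim\ker(A)=\rank(A)=\rank(A^*)$, combined with the evident inclusion $\im(A^*A)\subseteq\im(A^*)$.)
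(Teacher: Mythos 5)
Your proof is correct and follows essentially the same route as the paper's: the direct adjoint computation for $\ker(A^*)=\im(A)^\perp$, orthogonal-complement bookkeeping for $\im(A^*)=\ker(A)^\perp$ and the direct-sum decomposition, and the $\lVert Ax\rVert^2=\langle A^*Ax,x\rangle$ argument for $\ker(A^*A)=\ker(A)$. The only (cosmetic) divergence is in $\im(A^*A)=\im(A^*)$, where the paper uses the rank-nullity count you relegate to a parenthetical as its primary argument rather than your duality-through-self-adjointness route; both are valid.
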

\begin{proof}\hfill
\begin{dingautolist}{192}
\item Clearly $\ker(A) \subseteq \ker(A^* A)$. If $A^*Ax = 0$, then $\lVert Ax \rVert^2 = x^*A^*Ax = 0$, so $Ax = 0$, and so $\ker(A^* A) \subseteq \ker(A)$.
\item Applying rank-nullity theorem twice with  \ding{192}, we get
\begin{align*}
\rank (A^*A ) &= n - \nullity(A^*A) \\
&= n - \nullity(A) = \rank(A) = \rank(A^*).
\end{align*}
Since $\im(A^*A) \subseteq \im(A^* )$, the result follows.
\item If $x \in  \im(A)^\perp$, then $0 =\langle x, Ay \rangle = \langle A^*x, y\rangle$ for all $y \in \mathbb{R}^n$, so $A^*x = 0$. If $x \in \ker(A^*)$, then $\langle x, Ay \rangle = \langle A^*x, y\rangle = 0$ for all $y \in \mathbb{R}^n$, so $x \in \im(A)^\perp$.
\item By \ding{194}, $\im(A^*)^\perp = \ker(A^{**}) =  \ker(A)$ and result follows.
\item $\mathbb{R}^n = \ker(A) \oplus \ker(A)^\perp = \ker(A) \oplus \im(A^*)$ by \ding{195}.
\end{dingautolist}
\end{proof}

Our next proof ought to convince readers that the Hodge decomposition theorem \ding{200} is indeed an extension of the Fredholm alternative theorem  \ding{196} to a pair of matrices.
\begin{theorem}\label{thm:hodge}
Let $A \in \mathbb{R}^{m \times n}$ and $B \in \mathbb{R}^{n \times p}$ with $AB = 0$. Then
\begin{dingautolist}{197}
\item $\ker(A^*A + BB^*) = \ker(A) \cap \ker(B^*)$,
\item $\ker(A) =  \im(B) \oplus \ker(A^*A + BB^*) $,
\item $\ker(B^*) =  \im(A^*)  \oplus \ker(A^*A + BB^*) $,
\item $\mathbb{R}^n = \im(A^*) \oplus \ker(A^*A + BB^*) \oplus \im(B)$,
\item $\im(A^*A + BB^*)  = \im(A^*) \oplus \im(B)$.
\end{dingautolist}

\end{theorem}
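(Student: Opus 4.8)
The plan is to establish the five items in order, each bootstrapping from the previous ones together with the Fredholm alternative (Theorem~\ref{thm:fredholm}). The two structural facts to keep in front of us throughout are consequences of $AB=0$: directly it gives $\im(B)\subseteq\ker(A)$, and taking adjoints ($B^*A^*=0$) it gives $\im(A^*)\subseteq\ker(B^*)$. We will also repeatedly use the orthogonality relations $\im(B)\perp\ker(B^*)$ and $\im(A^*)\perp\ker(A)$, which are just Theorem~\ref{thm:fredholm} applied to $B^*$ and to $A$ respectively; in particular $\im(A^*)\perp\im(B)$ since $\langle A^*w,Bv\rangle=\langle w,ABv\rangle=0$ for all $v,w$.

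For \ding{197} I would mimic the positivity argument used for $\ker(A^*A)=\ker(A)$: the inclusion $\ker(A)\cap\ker(B^*)\subseteq\ker(A^*A+BB^*)$ is immediate, while if $(A^*A+BB^*)x=0$ then $0=\langle x,(A^*A+BB^*)x\rangle=\lVert Ax\rVert^2+\lVert B^*x\rVert^2$, forcing $Ax=0$ and $B^*x=0$. For \ding{198} and \ding{199} the idea is to intersect a Fredholm decomposition of $\mathbb{R}^n$ with the relevant kernel. For \ding{198}: both $\im(B)$ and $\ker(A^*A+BB^*)=\ker(A)\cap\ker(B^*)$ lie inside $\ker(A)$ and are mutually orthogonal, so their sum is an orthogonal direct sum contained in $\ker(A)$; conversely, given $x\in\ker(A)$, write $x=x_1+x_2$ with $x_1\in\ker(B^*)$ and $x_2\in\im(B)$ via $\mathbb{R}^n=\ker(B^*)\oplus\im(B)$, and note that $x_2\in\im(B)\subseteq\ker(A)$ forces $x_1=x-x_2\in\ker(A)\cap\ker(B^*)$. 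Item \ding{199} is the mirror image: decompose via $\mathbb{R}^n=\ker(A)\oplus\im(A^*)$ and use $\im(A^*)\subseteq\ker(B^*)$.

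Item \ding{200} then follows by stacking: $\mathbb{R}^n=\ker(A)\oplus\im(A^*)$ by Fredholm and $\ker(A)=\im(B)\oplus\ker(A^*A+BB^*)$ by \ding{198}, with all three summands pairwise orthogonal (the $\im(A^*)$ piece is orthogonal to all of $\ker(A)$, and $\im(B)\perp\ker(A^*A+BB^*)$ by \ding{198}). Finally, for \ding{201} set $M:=A^*A+BB^*$; since $M$ is Hermitian, $\im(M)=\ker(M^*)^\perp=\ker(M)^\perp$, and by \ding{200} the orthogonal complement of $\ker(M)$ is exactly $\im(A^*)\oplus\im(B)$, which is an orthogonal --- hence internal --- direct sum by the opening remark. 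I do not anticipate a real obstacle: the entire content is the Fredholm alternative plus careful bookkeeping of containments. The single point demanding care is confirming that each sum is an internal direct sum with the claimed orthogonality, which is exactly why it pays to record the inclusions $\im(B)\subseteq\ker(A)$, $\im(A^*)\subseteq\ker(B^*)$ and the orthogonality relations at the very start; after that \ding{198}--\ding{201} are essentially one-line deductions.
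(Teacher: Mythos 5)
Your proposal is correct, and for items \ding{198}--\ding{201} it follows essentially the same route as the paper: intersect the Fredholm decompositions $\mathbb{R}^n=\ker(B^*)\oplus\im(B)$ and $\mathbb{R}^n=\ker(A)\oplus\im(A^*)$ with $\ker(A)$ and $\ker(B^*)$ respectively (the paper distributes the intersection over the direct sum in one line; your explicit decomposition $x=x_1+x_2$ with $x_2\in\im(B)\subseteq\ker(A)$ is precisely the justification that makes that distribution legitimate), then stack the decompositions for \ding{200}, and finally take the orthogonal complement of the kernel of the self-adjoint operator $A^*A+BB^*$ for \ding{201}. The one place you genuinely diverge is \ding{197}: you use the quadratic-form identity $\langle x,(A^*A+BB^*)x\rangle=\lVert Ax\rVert^2+\lVert B^*x\rVert^2$, whereas the paper rewrites the equation as $A^*Ax=-BB^*x$, multiplies by $A$ and by $B^*$, and argues that $A^*Ax$ lies in both $\ker(A)$ and $\im(A^*)=\ker(A)^\perp$ (and symmetrically that $BB^*x$ lies in both $\ker(B^*)$ and $\im(B)=\ker(B^*)^\perp$), hence both vanish. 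Your argument is shorter and is the direct generalization of the paper's own proof that $\ker(A^*A)=\ker(A)$; the paper's version has the mild advantage of reusing only the four-fundamental-subspace relations already catalogued in Theorem~\ref{thm:fredholm}, but both rest on the inner product in the same essential way (and both fail over fields of positive characteristic, as the paper's appendix notes). No gaps.
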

\begin{proof} Note that $\im(B) \subseteq \ker(A)$ as $AB = 0$, $\im(A^*) \subseteq \ker(B^*)$ as $B^*A^* = 0$.
\begin{dingautolist}{197}
\item Clearly $\ker(A) \cap \ker(B^*) \subseteq \ker(A^*A + BB^*)$. Let $x \in  \ker(A^*A + BB^*)$. Then $A^*A x = -BB^*x$.
\begin{itemize}
\item Multiplying by $A$, we get $AA^*A x = -ABB^*x = 0$ since $AB = 0$. So $A^*A x \in \ker(A)$. But $A^*A x \in \im(A^*) = \ker(A)^\perp$  by \ding{195}. So $A^*A x = 0$ and $x \in \ker(A^*A) =\ker(A)$ by \ding{192}.
\item Multiplying by $B^*$, we get $0 = B^*A^*A x = -B^*BB^*x$ since $B^*A^* = 0$. So $BB^* x \in \ker(B^*)$. But $BB^* x \in \im(B) = \ker(B^*)^\perp$  by \ding{194}. So $BB^* x = 0$ and $x \in \ker(BB^*) =\ker(B^*)$ by \ding{192}.
\end{itemize}
Hence $x \in \ker(A) \cap \ker(B^*)$.

\item  Applying \ding{196} to $B^*$,
\begin{align*}
\ker(A) &= \mathbb{R}^n \cap \ker(A) = [\ker(B^*) \oplus \im(B)] \cap \ker(A) \\
&=  [\ker(B^*) \cap \ker(A)] \oplus [\im(B) \cap \ker(A)  ] \\
& =  \ker(A^*A + BB^*)  \oplus \im(B),
\end{align*}
where the last equality follows from \ding{197} and $\im(B) \subseteq \ker(A)$.

\item Applying \ding{196},
\begin{align*}
\ker(B^*) &= \mathbb{R}^n \cap \ker(B^*) = [\ker(A) \oplus \im(A^*)] \cap \ker(B^*) \\
&=  [\ker(A) \cap \ker(B^*)] \oplus [\im(A^*) \cap \ker(B^*)  ] \\
& =  \ker(A^*A + BB^*)  \oplus \im(A^*),
\end{align*}
where the last equality follows from \ding{197} and $\im(A^*) \subseteq \ker(B^*)$. Alternatively, apply  \ding{198} with $B^*,A^*$ in place of $A,B$.

\item Applying \ding{196} to $B^*$ followed by \ding{199}, we get
\[
\mathbb{R}^n = \ker(B^*) \oplus \im(B) =    \im(A^*) \oplus \ker(A^*A + BB^*) \oplus  \im(B). 
\]

\item Applying \ding{196} to $A^*A + BB^*$, which is self-adjoint, we see that
\[
\im(A^*A + BB^*) = \ker(A^*A + BB^*)^\perp = \im(A^*) \oplus \im(B),
\]
where the last equality follows from \ding{200}. 
\end{dingautolist}
\end{proof}

Any two vector spaces of the same dimension are isomorphic. So saying that two vector spaces are isomorphic isn't saying very much --- just that they have the same dimension. The two spaces in \eqref{eq:cohomology} are special because they are \textit{naturally isomorphic}, i.e., if you construct an isomorphism, and the guy in the office next door constructs an isomorphism, both of you would end up with the same isomorphism, namely, the one below.
\begin{theorem}\label{thm:cohomology}
Let $A \in \mathbb{R}^{m \times n}$ and $B \in \mathbb{R}^{n \times p}$ with $AB = 0$. Then the following spaces are naturally isomorphic
\[
\ker(A)/\im(B) \cong \ker(A) \cap \ker(B^*) \cong \ker(B^*)/\im(A^*).
\]
\end{theorem}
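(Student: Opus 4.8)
The plan is to exhibit the isomorphisms as orthogonal projections onto $\ker(A)\cap\ker(B^*)$, exactly as suggested by the discussion of harmonic representatives in Section~\ref{sec:harm}. All the real content is already packaged in Theorem~\ref{thm:hodge}; what remains is to read off the isomorphisms and explain why they are canonical.

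First I would invoke Theorem~\ref{thm:hodge}\ding{198}, which gives the orthogonal direct sum $\ker(A) = \im(B) \oplus \bigl(\ker(A)\cap\ker(B^*)\bigr)$ (using \ding{197} to identify $\ker(A^*A+BB^*)$ with $\ker(A)\cap\ker(B^*)$). Let $P\colon \ker(A) \to \ker(A)\cap\ker(B^*)$ be the orthogonal projection associated with this decomposition. Then $P$ is linear and surjective with $\ker(P) = \im(B)$, so the first isomorphism theorem for vector spaces yields a well-defined linear isomorphism $\ker(A)/\im(B) \xrightarrow{\ \sim\ } \ker(A)\cap\ker(B^*)$ sending the coset $x + \im(B)$ to $Px$. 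Concretely, for $x \in \ker(A)$ one writes $x = Bv + x_H$ with $x_H \in \ker(A)\cap\ker(B^*)$, and the map is $x + \im(B) \mapsto x_H$; this is exactly the assignment of a cohomology class to its harmonic representative, so it does not depend on the choice of representative $x$.

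For the second isomorphism I would argue symmetrically: since $B^*A^* = 0$, Theorem~\ref{thm:hodge}\ding{199} (or equivalently Theorem~\ref{thm:hodge}\ding{198} applied with $B^*,A^*$ in place of $A,B$) gives the orthogonal direct sum $\ker(B^*) = \im(A^*) \oplus \bigl(\ker(A)\cap\ker(B^*)\bigr)$. The orthogonal projection $Q\colon \ker(B^*)\to \ker(A)\cap\ker(B^*)$ is linear, surjective, and has kernel $\im(A^*)$, so again by the first isomorphism theorem $\ker(B^*)/\im(A^*) \xrightarrow{\ \sim\ } \ker(A)\cap\ker(B^*)$. Composing the two isomorphisms (one inverted) gives $\ker(A)/\im(B)\cong\ker(B^*)/\im(A^*)$, completing the chain.

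I do not expect a genuine obstacle in the calculation — the only delicate point is the word \emph{natural}. I would address it by remarking that the isomorphisms above are built solely from the given data $A$, $B$ and the fixed inner products: the orthogonal projections $P$ and $Q$ are uniquely determined once one has the decompositions of Theorem~\ref{thm:hodge}, with no arbitrary choice of basis or of complement. Hence anyone carrying out this construction arrives at the same map, which is the precise sense of naturality claimed (and one can check, if desired, that these maps are compatible with morphisms of the relevant diagrams, i.e. functorial). This is the step I would be most careful to phrase correctly, since the rest is an immediate consequence of Theorem~\ref{thm:hodge}.
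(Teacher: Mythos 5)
Your proposal is correct and follows essentially the same route as the paper: both identify the isomorphism as the orthogonal projection of $\ker(A)$ onto the harmonic subspace (the paper builds it as the restriction of the projection onto $\im(B)^\perp$, you read it off from Theorem~\ref{thm:hodge}\ding{198} and \ding{197}), and both conclude via the first isomorphism theorem, handling $\ker(B^*)/\im(A^*)$ by the symmetric substitution $B^*,A^*$ for $A,B$. Your closing remark on naturality also matches the paper's intent, which explains naturality informally as "everyone constructing the isomorphism gets the same map."
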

\begin{proof}
Let $\pi : \mathbb{R}^n \to \im(B)^\perp$ be the orthogonal projection of $\mathbb{R}^n$ onto the orthogonal complement of $\im(B)$. So any $x \in \mathbb{R}^n$ has a unique decomposition into two mutually orthogonal components
\[
\setlength{\arraycolsep}{1.5pt}
\begin{matrix}
\mathbb{R}^n &=& \im(B)^\perp &\oplus & \im(B),\\[0.5ex]
x &=& \pi(x) &+& (1 -\pi)(x) .
\end{matrix}
\]
Let $\pi_A$ be $\pi$ restricted to the subspace $\ker(A)$. So any $x \in \ker(A) $ has a unique decomposition into two mutually orthogonal components
\[
\setlength{\arraycolsep}{1.5pt}
\begin{matrix}
\ker(A) &=& \bigl( \ker(A) \cap\im(B)^\perp \bigr) &\oplus& \im(B),\\[0.5ex]
x &=& \pi_A(x) &+& (1 -\pi_A)(x),
\end{matrix}
\]
bearing in mind that $\ker(A) \cap\im(B) = \im(B)$ since $\im(B) \subseteq \ker(A)$.

As $\pi$ is surjective, so is $\pi_A$. Hence $\im(\pi_A) =   \ker(A) \cap  \im(B)^\perp $. Also, for any $x \in \ker(A)$, $\pi_A (x) = 0$ iff the component of $x$ in $\im(B)^\perp$ is zero, i.e., $x \in \im(B)$. Hence  $\ker(\pi_A)  = \im(B)$. The first isomorphsim theorem,
\[
\ker(A)/\ker(\pi_A) \cong \im(\pi_A) =   \ker(A) \cap  \im(B)^\perp
\]
yields the required result since $\im(B)^\perp = \ker(B^*)$ by \ding{194}. The other isomorphism may be obtained as usual by using $B^*,A^*$ in place of $A,B$.
\end{proof}

In mathematics, \textit{linear algebra} usually refers to a collection of facts that follow from the defining axioms of a field and of a vector space. In this regard, every single statement in Theorems~\ref{thm:fredholm}, \ref{thm:hodge}, \ref{thm:cohomology} is false as a statement in linear algebra --- they depend specifically on our working over a subfield of $\mathbb{C}$ and are not true over arbitrary fields. For example, consider the finite field of two elements $\mathbb{F}_2 = \{0,1\}$ and take
\[
A = B = \begin{bmatrix}
1 & 1\\
1 & 1
\end{bmatrix} .
\]
Then $A^* = A = B = B^*$, and $AB = B^*A^* = A^*A = BB^* =A^*A + BB^* = 0$, which serves as a counterexample to \ding{192}, \ding{193}, \ding{196}, \ding{198}, \ding{199}, \ding{200}, \ding{201}, and Theorem~\ref{thm:cohomology}.

\subsection{Div, Grad, Curl, and All That}\label{sec:RV}

We provide routine verifications of statements claimed in Sections~\ref{sec:graph} and \ref{sec:HO}.

\begin{lemma}\label{lem:grad*}
Equip $L^2(V)$ and $L^2_\wedge(E)$ with the inner products in \eqref{eq:inner}, we have
\[
\grad^* X(i) = -\sum_{j=1}^n \frac{w_{ij}}{w_i} X(i,j) = -\dive X(i).
\]
\end{lemma}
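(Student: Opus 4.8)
The plan is to compute the adjoint directly from its defining property: $\grad^* : L^2_\wedge(E) \to L^2(V)$ is the unique linear map satisfying $\langle \grad f, X \rangle_E = \langle f, \grad^* X \rangle_V$ for all $f \in L^2(V)$ and all $X \in L^2_\wedge(E)$. Since the right-hand side is $\sum_{i=1}^n w_i f(i)\,(\grad^* X)(i)$ by \eqref{eq:inner}, it suffices to massage the left-hand side into the form $\sum_{i=1}^n w_i f(i)\, g(i)$ for a fixed $g \in L^2(V)$ depending only on $X$; then, because $f$ is arbitrary and the weights $w_i$ are positive, we may read off $(\grad^* X)(i) = g(i)$.

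First I would substitute the definitions: $\langle \grad f, X\rangle_E = \sum_{i<j} w_{ij}\,(\grad f)(i,j)\,X(i,j) = \sum_{i<j} w_{ij}\,(f(j)-f(i))\,X(i,j)$. The key observation is that the summand $w_{ij}(f(j)-f(i))X(i,j)$ is invariant under the swap $i \leftrightarrow j$ (using $w_{ij}=w_{ji}$ and $X(j,i) = -X(i,j)$), and vanishes when $i=j$ since $X(i,i)=0$. Hence the sum over $i<j$ equals one half of the sum over all ordered pairs, i.e.
\[
\langle \grad f, X\rangle_E = \frac12 \sum_{i,j=1}^n w_{ij}\,(f(j)-f(i))\,X(i,j).
\]
Splitting this into two sums and relabelling $i \leftrightarrow j$ in the one containing $f(j)$ (again invoking $w_{ij}=w_{ji}$ and skew-symmetry of $X$) shows the two sums are equal, giving $\langle \grad f, X\rangle_E = -\sum_{i,j=1}^n w_{ij}\, f(i)\, X(i,j) = -\sum_{i=1}^n w_i\, f(i) \Bigl( \sum_{j=1}^n \tfrac{w_{ij}}{w_i} X(i,j)\Bigr)$.

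Comparing with $\langle f, \grad^* X\rangle_V = \sum_{i=1}^n w_i f(i)\,(\grad^* X)(i)$ and using that $f$ is arbitrary yields $(\grad^* X)(i) = -\sum_{j=1}^n \tfrac{w_{ij}}{w_i} X(i,j)$, which is precisely $-\dive X(i)$ by the definition of the divergence operator in Section~\ref{sec:coboundary}. There is no real obstacle here; the only point requiring a little care is the bookkeeping that converts the sum over unordered pairs $\{i,j\}$ (which counts each edge once) into a symmetric sum over ordered pairs, and keeping track of the sign introduced by the skew-symmetry $X(i,j) = -X(j,i)$ when the dummy indices are relabelled.
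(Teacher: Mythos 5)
Your proposal is correct and follows essentially the same route as the paper's proof: both compute $\grad^*$ from the defining identity $\langle \grad f, X\rangle_E = \langle f, \grad^* X\rangle_V$, substitute the definitions, and use $w_{ij}=w_{ji}$ together with the skew-symmetry of $X$ to relabel dummy indices and read off the coefficient of $w_i f(i)$. Your symmetrization of the sum over $i<j$ into half the full double sum is only a cosmetic repackaging of the paper's direct relabelling of one of the two split sums.
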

\begin{proof}
The required expression follows from
\begin{align*}
\langle \grad^* X, f \rangle_V &= \langle X, \grad f \rangle_E\\
&= \sum\nolimits_{i < j} w_{ij} X(i,j) \grad f(i,j)\\
&= \sum\nolimits_{i < j} w_{ij} X(i,j) [f(j) - f(i)]\\
&=\sum\nolimits_{i < j} w_{ij} X(i,j) f(j) +\sum\nolimits_{i < j} w_{ij} X(j,i)  f(i)\\
&\stackrel{\text{\ding{192}}}{=}\sum\nolimits_{j < i} w_{ji} X(j,i) f(i) +\sum\nolimits_{i < j} w_{ij} X(j,i)  f(i)\\
&\stackrel{\text{\ding{193}}}{=}\sum\nolimits_{j < i} w_{ij} X(j,i) f(i) +\sum\nolimits_{i < j} w_{ij} X(j,i)  f(i)\\
&=\sum\nolimits_{i \ne j} w_{ij} X(j,i) f(i) \\
&=\sum\nolimits_{ i =1}^n w_i \Bigl[\sum\nolimits_{j: j \ne i } \frac{w_{ij}}{w_i} X(j,i) \Bigr] f(i)\\
&\stackrel{\text{\ding{194}}}{=}\sum\nolimits_{ i =1}^n w_i \Bigl[\underbrace{\sum\nolimits_{j=1}^n \frac{w_{ij}}{w_i} X(j,i) }_{\grad^* X(i)}\Bigr] f(i).
\end{align*}
\begin{dingautolist}{192}
\item follows from swapping labels $i$ and $j$ in the first summand.
\item follows from $w_{ij} = w_{ji}$. 
\item follows from $X(i,i) =0$.
\end{dingautolist}
\end{proof}

\begin{lemma}\label{lem:curl*}
Equip $L^2_\wedge(E)$ and $L^2_\wedge(T)$ with the inner products in \eqref{eq:inner}, we have
\[
\curl^* \Phi(i,j) =\sum_{k=1}^n \frac{w_{ijk}}{w_{ij}} \Phi(i,j,k).
\]
\end{lemma}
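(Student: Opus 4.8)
The plan is to follow the template of the proof of Lemma~\ref{lem:grad*}: compute $\curl^*$ directly from the adjoint relation $\langle \curl^* \Phi, X \rangle_E = \langle \Phi, \curl X \rangle_T$, which must hold for every $X \in L^2_\wedge(E)$, and read off the formula by re-indexing a triple sum into a sum over edges.

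First I would expand the right-hand side using the inner product \eqref{eq:inner} and the definition $(\curl X)(i,j,k) = X(i,j) + X(j,k) + X(k,i)$, giving
\[
\langle \Phi, \curl X \rangle_T = \sum_{i<j<k} w_{ijk}\,\Phi(i,j,k)\bigl[X(i,j) + X(j,k) + X(k,i)\bigr],
\]
and split this into three sums, one per summand in the bracket. Next, in each of the three sums I would relabel the dummy indices so that the edge argument of $X$ is written in the canonical form $X(i,j)$ with $i<j$: for the $X(i,j)$-term nothing changes and the free index $k$ runs over $k>j$; for the $X(j,k)$-term I use the cyclic symmetry $\Phi(i,j,k) = \Phi(j,k,i)$ together with the permutation-invariance $w_{ijk} = w_{jki}$, after which $k$ runs over $k<i$; for the $X(k,i)$-term I use $X(k,i) = -X(i,k)$ together with $\Phi(i,j,k) = \Phi(k,i,j) = -\Phi(i,k,j)$ and again permutation-invariance of the weight, after which the free index runs over $i<k<j$. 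Collecting the coefficient of $X(i,j)$ then gives $\bigl(\sum_{k<i} + \sum_{i<k<j} + \sum_{k>j}\bigr) w_{ijk}\Phi(i,j,k)$; since an alternating function with a repeated argument vanishes, $\Phi(i,j,i) = \Phi(i,j,j) = 0$, so the three partial ranges fuse into the full range and this is exactly $\sum_{k=1}^n w_{ijk}\Phi(i,j,k)$.

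Finally I would conclude: $\langle \Phi, \curl X \rangle_T = \sum_{i<j} w_{ij}\bigl(\sum_{k=1}^n \tfrac{w_{ijk}}{w_{ij}}\Phi(i,j,k)\bigr)X(i,j) = \langle Y, X \rangle_E$ where $Y(i,j) := \sum_{k=1}^n \tfrac{w_{ijk}}{w_{ij}}\Phi(i,j,k)$; since $\langle\cdot,\cdot\rangle_E$ is nondegenerate on $L^2_\wedge(E)$ and this holds for all $X$, necessarily $\curl^* \Phi = Y$, which is the claimed formula. I expect the only real obstacle to be the index bookkeeping in the three relabelings — making sure the edge is always put in the form $(i,j)$ with $i<j$, that the sign from $X(k,i) = -X(i,k)$ cancels the sign from transposing the first two arguments of $\Phi$, and that the partial ranges $k<i$, $i<k<j$, $k>j$ glue correctly to $\{1,\dots,n\}$. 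These are the exact analogues of the steps \ding{192}--\ding{194} in the proof of Lemma~\ref{lem:grad*}, where $w_{ij}=w_{ji}$ and $X(i,i)=0$ played the same role that permutation-invariance of $w_{ijk}$ and the vanishing of $\Phi$ on repeated arguments play here.
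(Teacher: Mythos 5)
Your proposal is correct and follows essentially the same route as the paper's proof: expand $\langle \Phi, \curl X\rangle_T$, split into three sums, use the alternating property of $\Phi$ and the permutation-invariance of the weights to put every term in the form $w_{ijk}\Phi(i,j,k)X(i,j)$ with ranges $k<i$, $i<k<j$, $k>j$, and fuse these into $\sum_{k=1}^n$ via $\Phi(i,j,i)=\Phi(i,j,j)=0$. The only cosmetic difference is that you flip the sign on $X(k,i)$ first and cancel it against the sign from transposing arguments of $\Phi$, whereas the paper keeps $X(k,i)$ and cancels the two sign changes at the end; the bookkeeping is identical.
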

\begin{proof}
The required expression follows from
\begin{align*}
\langle \curl^* \Phi, X \rangle_E &= \langle \Phi, \curl X \rangle_T = \sum\nolimits_{i < j < k} w_{ijk} \Phi(i,j,k) \curl X(i,j,k)\\
&= \sum\nolimits_{i < j < k} w_{ijk} \Phi(i,j,k) [X(i,j) + X(j,k)+X(k,i)]\\
&=\sum\nolimits_{i < j < k} w_{ijk} \Phi(i,j,k) X(i,j) +\sum\nolimits_{i < j < k} w_{ijk} \Phi(i,j,k)  X(j,k)\\
&\qquad \qquad  +\sum\nolimits_{i < j < k} w_{ijk} \Phi(i,j,k)  X(k,i)\\
&\stackrel{\text{\ding{192}}}{=}\sum\nolimits_{i < j < k} w_{ijk} \Phi(i,j,k) X(i,j) +\sum\nolimits_{i < j < k} w_{ijk} \Phi(j,k,i)  X(j,k)\\
&\qquad \qquad  +\sum\nolimits_{i < j < k} w_{ijk} \Phi(k,i,j)  X(k,i)\\
&\stackrel{\text{\ding{193}}}{=}\sum\nolimits_{i < j < k} w_{ijk} \Phi(i,j,k) X(i,j) +\sum\nolimits_{k < i < j} w_{kij} \Phi(i,j,k) X(i,j)\\
&\qquad \qquad  +\sum\nolimits_{i< k < j} w_{ikj} \Phi(j,i,k) X(j,i)\\
&\stackrel{\text{\ding{194}}}{=}\sum\nolimits_{i < j < k} w_{ijk} \Phi(i,j,k) X(i,j) +\sum\nolimits_{k < i < j} w_{kij} \Phi(i,j,k) X(i,j)\\
&\qquad \qquad  +\sum\nolimits_{i < k < j} w_{ikj} \Phi(i,j,k) X(i,j)\\
&=\sum\nolimits_{i < j}\Bigl[ \Bigl( \sum\nolimits_{k=j+1}^n + \sum\nolimits_{k=1}^{i-1} + \sum\nolimits_{k=i+1}^{j-1} \Bigr) w_{ijk} \Phi(i,j,k)\Bigr] X(i,j)\\
&=\sum\nolimits_{i < j}w_{ij}\Bigl[ \sum\nolimits_{k: k \ne i,j} \frac{w_{ijk}}{w_{ij}} \Phi(i,j,k)\Bigr] X(i,j)\\
&\stackrel{\text{\ding{195}}}{=}\sum\nolimits_{i < j}w_{ij}\Bigl[ \underbrace{\sum\nolimits_{k=1}^n \frac{w_{ijk}}{w_{ij}} \Phi(i,j,k)}_{\curl^* \Phi(i,j)}\Bigr] X(i,j).
\end{align*}
\begin{dingautolist}{192}
\item follows from the alternating property of $\Phi$.
\item follows from relabeling $j,k,i$ as $i,j,k$ in the second summand and swapping labels $j$ and $k$ in the third summand.
\item follows from $\Phi(j,i,k) X(j,i) = \Phi(i,j,k) X(i,j) $ since both changed signs.
\item follows from $\Phi(i,j,i) = \Phi(i,j,j) = 0$.
\end{dingautolist}
\end{proof}

\begin{lemma}\label{lem:gl}
The operator $\Delta_0 = -\dive  \grad$ gives us the usual graph Laplacian.
\end{lemma}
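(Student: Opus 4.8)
The plan is to simply unwind the two definitions and recognize the result as the classical matrix $D - A$. First I would write out the composition pointwise: for $f \in L^2(V)$ and $i \in V$,
\[
(\Delta_0 f)(i) = -(\dive \grad f)(i) = -\sum_{j=1}^n \frac{w_{ij}}{w_i}(\grad f)(i,j),
\]
using the definition of $\dive$ from Section~\ref{sec:coboundary}. Then I would substitute $(\grad f)(i,j) = f(j) - f(i)$, valid for $\{i,j\} \in E$, and note that the remaining terms in the sum vanish because $\grad f$ is defined to be zero on non-edges (equivalently $w_{ij}X(i,j)$ only sees edges). This gives
\[
(\Delta_0 f)(i) = \sum_{j:\{i,j\}\in E} \frac{w_{ij}}{w_i}\bigl(f(i) - f(j)\bigr).
\]

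Next I would specialize to the standard $l^2$-inner products, i.e.\ $w_i = 1$ and $w_{ij} = 1$ for all $i,j$, so that the expression collapses to
\[
(\Delta_0 f)(i) = \sum_{j:\{i,j\}\in E}\bigl(f(i)-f(j)\bigr) = d_i\, f(i) - \sum_{j:\{i,j\}\in E} f(j),
\]
where $d_i = \#\{j : \{i,j\}\in E\}$ is the degree of vertex $i$. Writing this in matrix form over the basis of vertices, the right-hand side is precisely $(D - A)f$, where $D = \diag(d_1,\dots,d_n)$ is the degree matrix and $A$ is the adjacency matrix of $G$. Since $L = D - A$ is exactly the usual (combinatorial) graph Laplacian, this establishes the claim.

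Finally I would add a one-line remark that for general positive weights the same computation yields the \emph{weighted} graph Laplacian, with $(i,i)$ entry $\sum_{j}w_{ij}/w_i$ and $(i,j)$ entry $-w_{ij}/w_i$ for $\{i,j\}\in E$, recovering the well-known expression $\Delta_0 = \grad^*\grad = B^\mathsf{T}B$ in terms of the vertex-edge incidence matrix (cf.\ \eqref{eq:adj} and Lemma~\ref{lem:grad*}). There is no real obstacle here; the only point requiring a word of care is the bookkeeping that turns the formal sum $\sum_{j=1}^n$ into a sum over neighbors of $i$, which follows immediately from the convention that cochains vanish off their defining cliques.
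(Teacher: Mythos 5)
Your proposal is correct and follows essentially the same route as the paper's proof: unwind the definitions of $\dive$ and $\grad$ pointwise, collapse the sum to neighbors of $i$, and recognize the resulting operator as $D-A$. The only (welcome) difference is that you explicitly specialize the weights $w_{ij}/w_i$ to $1$ before identifying the matrix, whereas the paper silently assumes the standard $l^2$-inner product; your closing remark on the weighted case is a harmless bonus.
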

\begin{proof}
Let $f \in L^2(V)$. By definition,
\[
\grad f(i,j) =
\begin{cases}
f(j) - f(i) &\text{if } \{i,j\} \in E,\\
0 &\text{otherwise}.
\end{cases}
\]
Define the adjacency matrix $A \in \mathbb{R}^{n \times n}$ by
\[
a_{ij} =
\begin{cases}
1 & \text{if } \{i, j\} \in E,\\
0 & \text{otherwise}.
\end{cases}
\]
The gradient may be written as $\grad f(i,j) = a_{ij} (f(j) - f(i))$ and so
\begin{equation}\label{eq:familiar}
\begin{aligned}
(\Delta_0 f)(i) &= - [\dive (\grad f)] (i)  = -[\dive a_{ij}(f (j) - f(i))](i) \\
&= - \sum\nolimits_{j=1}^n a_{ij}[ f(j) - f(i) ] = d_i f(i) - \sum\nolimits_{j=1}^n  a_{ij} f(j),
\end{aligned}
\end{equation}
where for any vertex $i =1,\dots,n$, we define its degree as
\[
d_i = \deg(i) = \sum\nolimits_{j=1}^n a_{ij}.
\]
If we regard a function $f \in L^2(V)$ as a vector $(f_1,\dots,f_n) \in \mathbb{R}^n$ where $f(i) = f_i$ and set $D = \diag(d_1,\dots, d_n)  \in \mathbb{R}^{n \times n}$, then \eqref{eq:familiar} becomes
\[
\Delta_0 f =
\begin{bmatrix}
d_1 - a_{11} & - a_{12} &\cdots & -a_{1n}\\
-a_{21} & d_2 - a_{22} &\cdots & - a_{2n}\\
\vdots &   &\ddots &  \vdots\\
-a_{n1} & - a_{n2} &\cdots & d_n-a_{nn}
\end{bmatrix}
\begin{bmatrix}
f_1\\
f_2\\
\vdots\\
f_n
\end{bmatrix} = (D - A)f.
\]
So $\Delta_0$ may be regarded as  $D - A$, the usual definition of a graph Laplacian.
\end{proof}

\begin{theorem}\label{thm:fund}
We have that
\[
\curl \grad  = 0, \qquad \dive \curl^* = 0,
\]
and more generally, for $k = 1,2,\dots,$
\[
\delta_k \delta_{k-1} = 0, \qquad \delta_{k-1}^* \delta_k^* = 0.
\]
\end{theorem}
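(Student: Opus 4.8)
\textit{Proof proposal.} The plan is to establish the general identity $\delta_k \delta_{k-1} = 0$ by a direct computation from the definition \eqref{eq:cob}, and then to obtain the adjoint identity and the two scalar statements as immediate corollaries. First I would fix an arbitrary $(k-1)$-cochain $f \in L^2_\wedge(K_k)$ together with a clique $\{i_0,\dots,i_{k+1}\} \in K_{k+2}$, and expand, using the carat notation \eqref{eq:carat},
\[
(\delta_k \delta_{k-1} f)(i_0,\dots,i_{k+1}) = \sum_{j=0}^{k+1} (-1)^j (\delta_{k-1} f)(i_0,\dots,\widehat{i}_j,\dots,i_{k+1}).
\]
Applying \eqref{eq:cob} once more to each inner term produces a double sum whose summands are values of $f$ on $k$-tuples obtained from $(i_0,\dots,i_{k+1})$ by deleting two of its entries. (This is legitimate even for tuples not forming a clique, since every sub-tuple of a clique is a clique and the zero convention is respected throughout.)

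The key step is a sign-cancellation argument. I would group the terms of the double sum into pairs indexed by the unordered pair of deleted positions $\{p,q\}$, say $p < q$. The summand that first deletes $i_q$ (contributing $(-1)^q$ from the outer sum) and then deletes $i_p$ from the remaining $(k+1)$-tuple, in which $i_p$ still occupies position $p$ (contributing $(-1)^p$), carries total sign $(-1)^{p+q}$; the summand that first deletes $i_p$ (contributing $(-1)^p$) and then deletes $i_q$, which has now slid down to position $q-1$ (contributing $(-1)^{q-1}$), carries total sign $(-1)^{p+q-1}$. Both summands equal the same value $f(i_0,\dots,\widehat{i}_p,\dots,\widehat{i}_q,\dots,i_{k+1})$ but with opposite signs, so they cancel; summing over all pairs $\{p,q\}$ gives $0$. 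A cleaner way to package the bookkeeping is to split the double sum according to whether the inner deletion index is below or at/above the outer one, and then relabel so that both halves are sums over ordered pairs $p<q$ of positions.

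Taking adjoints of $\delta_k \delta_{k-1} = 0$ gives $(\delta_k \delta_{k-1})^* = \delta_{k-1}^* \delta_k^* = 0$. Finally, the two scalar identities are the case $k=1$: with $\delta_0 = \grad$ and $\delta_1 = \curl$ we get $\curl\grad = 0$, and since $\dive = -\grad^* = -\delta_0^*$ by \eqref{eq:adj} and $\curl^* = \delta_1^*$, the adjoint relation $\delta_0^* \delta_1^* = 0$ reads $\dive\curl^* = 0$ up to the harmless overall sign. The main obstacle is purely the index bookkeeping in the double sum --- keeping track of how positions shift after a deletion and getting the signs right --- rather than anything conceptual; this is exactly the discrete avatar of ``the boundary of a boundary is zero.''
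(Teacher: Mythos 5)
Your proposal is correct and follows essentially the same route as the paper's proof: expand the double sum from \eqref{eq:cob}, track how a deletion shifts the position of later arguments so that the two orders of deleting positions $p<q$ carry signs $(-1)^{p+q}$ and $(-1)^{p+q-1}$, and cancel in pairs, then obtain $\delta_{k-1}^*\delta_k^*=0$ by adjoints and the scalar identities by specializing to $k=1$. The ``split the double sum by whether the inner index is below or above the outer one and relabel'' packaging you mention at the end is exactly the bookkeeping the paper uses.
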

\begin{proof}
We only need to check $\delta_k \delta_{k-1} = 0$. The other relations follow from taking adjoint or specializing to $k=1$. Let $f \in L^2_\wedge(K_{k-1})$. By \eqref{eq:cob} and \eqref{eq:carat},
\begin{align*}
&(\delta_{k} \delta_{k-1} f)(i_{0},\dots,i_{k+1}) =\sum\nolimits_{j=0}^{k+1}(-1)^{j} \delta_{k-1}f(i_{0},\dots,\widehat{i}_{j},\dots,i_{k+1})\\
&\qquad \stackrel{\text{\ding{192}}}{=}\sum\nolimits_{j=0}^{k+1}(-1)^{j} \Bigl[ \sum\nolimits_{\ell =0}^{j-1}(-1)^{\ell}  f(i_{0},\dots,\widehat{i}_{\ell},\dots,\widehat{i}_{j},\dots,i_{k+1})\\
&\qquad\qquad\qquad\qquad\qquad\qquad +\sum\nolimits_{\ell =j+1}^{k+1}(-1)^{\ell-1}  f(i_{0},\dots,\widehat{i}_{j},\dots,\widehat{i}_{\ell},\dots,i_{k+1})\Bigr]\\
&\qquad =\sum\nolimits_{j < \ell }(-1)^{j} (-1)^{\ell}  f(i_{0},\dots,\widehat{i}_{j},\dots,\widehat{i}_{\ell},\dots,i_{k+1})\\
&\qquad\qquad\qquad+ \sum\nolimits_{j > \ell }(-1)^{j} (-1)^{\ell-1}  f(i_{0},\dots,\widehat{i}_{\ell},\dots,\widehat{i}_{j},\dots,i_{k+1})\\
&\qquad \stackrel{\text{\ding{193}}}{=}\sum\nolimits_{j < \ell }(-1)^{j+\ell}  f(i_{0},\dots,\widehat{i}_{j},\dots,\widehat{i}_{\ell},\dots,i_{k+1})\\
&\qquad\qquad\qquad + \sum\nolimits_{\ell > j} (-1)^{j + \ell-1}  f(i_{0},\dots,\widehat{i}_{j},\dots,\widehat{i}_{\ell},\dots,i_{k+1})\\
&\qquad =\sum\nolimits_{j < \ell }(-1)^{j+\ell}  f(i_{0},\dots,\widehat{i}_{j},\dots,\widehat{i}_{\ell},\dots,i_{k+1})\\
&\qquad\qquad\qquad - \sum\nolimits_{j < \ell} (-1)^{j + \ell}  f(i_{0},\dots,\widehat{i}_{j},\dots,\widehat{i}_{\ell},\dots,i_{k+1}) = 0.
\end{align*}
The power of $-1$ in the third sum in \ding{192} is $\ell - 1$ because an argument preceding $\widehat{i}_{\ell}$ is omitted and so $\widehat{i}_{\ell}$ is the $(\ell -1)$th argument (which is also omitted). \ding{193} follows from swapping labels $j$ and $\ell$ in the second sum.
\end{proof}

\subsection{Calculations}\label{sec:cal}

We will work out the details of Example~\ref{eg:gip}. While we have defined coboundary operators and Hodge Laplacians as abstract, coordinate-free linear operators, any actual applications would invariably involve `writing them down' as matrices to facilitate calculations. Readers might perhaps find our concrete approach here instructive.

A simple recipe for writing down a matrix representing a coboundary operator or a Hodge Laplacian is as follows: Given an undirected graph, label its vertices and edges arbitrarily but differently for easy distinction (e.g., we used numbers for vertices and letters for edges) and assign arbitrary directions to the edges. From the graphs in Figure~\ref{fig:iso1}, we get the labeled directed graphs $G_1$ (left) and $G_2$ (right) in Figure~\ref{fig:iso1a}.
\begin{figure}[h]
\centering
\scalebox{0.75}{\begin{tikzpicture}
\SetGraphUnit{2}
\coordinate (O) at (0,0);
\WE(O){1} \NO(O){2} \EA(O){3} \SO(O){4}
\NOEA(3){5} \SOEA(3){6}
\tikzset{EdgeStyle/.style = {->}}
\Edge[label=$a$](1)(2)
\Edge[label=$b$](2)(3)
\Edge[label=$c$](3)(4)
\Edge[label=$d$](4)(1)
\Edge[label=$e$](3)(5)
\Edge[label=$f$](5)(6)
\Edge[label=$g$](3)(6)
\end{tikzpicture}}
\qquad\qquad
\scalebox{0.75}{
\begin{tikzpicture}
\SetGraphUnit{2}
\Vertex{6}
\WE(6){1} \NO(6){2} \EA(6){3} \SO(6){4}
\EA(3){5}
\tikzset{EdgeStyle/.style = {->}}
\Edge[label=$a$](1)(2)
\Edge[label=$b$](2)(3)
\Edge[label=$c$](3)(4)
\Edge[label=$d$](4)(1)
\Edge[label=$e$](3)(5)
\Edge[label=$f$](4)(6)
\Edge[label=$g$](6)(2)
\end{tikzpicture}}
\caption{The graphs in Figure~\ref{fig:iso1}, with vertices and edges arbitrarily labeled and directions on edges arbitrarily assigned.}
\label{fig:iso1a}
\end{figure}
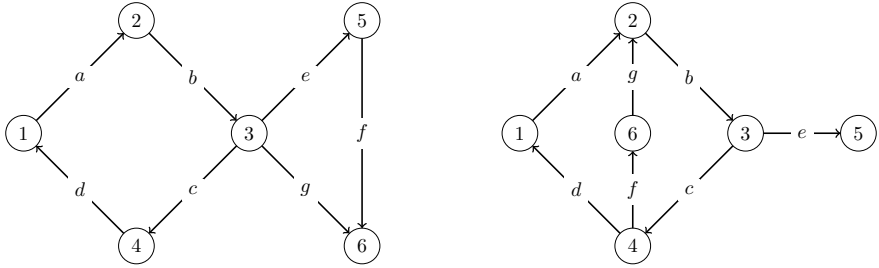

The next step is to write down a matrix whose columns are indexed by the vertices and the rows are indexed by the edges and whose $(i,j)$th entry is $+1$ if $j$th edge points into the $i$th vertex, $-1$ if $j$th edge points out of the $i$th vertex, and $0$ otherwise. This matrix represents the gradient operator $\delta_0 = \grad$. We get
\[
A_1 = \kbordermatrix{ 
& 1 & 2 &3 & 4 &5 &6 \\
a & -1 & 1 & 0 & 0 & 0 & 0\\
b & 0 & -1 & 1 & 0 & 0 & 0\\
c & 0 & 0 & -1 & 1 & 0 & 0\\
d & 1 & 0 & 0 & -1 & 0 & 0\\
e & 0 & 0 & -1 & 0 & 1 & 0\\
f & 0 & 0 & 0 & 0 & -1 & 1\\
g & 0 & 0 & -1 & 0 & 0 & 1
},
A_2 = \kbordermatrix{ 
& 1 & 2 &3 & 4 &5 &6 \\
a & -1    & 1     & 0     & 0     & 0     & 0 \\
b  & 0    & -1    & 1     & 0     & 0     & 0 \\
c  & 0     & 0    & -1    & 1     & 0     & 0 \\
d  & 1     & 0     & 0    & -1     & 0     & 0 \\
e  & 0     & 0    & -1     & 0    & 1     & 0 \\
f  & 0     & 0     & 0    & -1     & 0    & 1 \\
g  & 0    & 1     & 0     & 0     & 0    & -1
}
\]
for $G_1$ and $G_2$ respectively. Note that every row must contain exactly one $+1$ and one $-1$ since every edge is defined by a pair of vertices. This matrix is also known as a vertex-edge incidence matrix of the graph. Our choice of $\pm 1$ for in/out-pointing edges is also arbitrary --- the opposite choice works just as well as long as we are consistent throughout.

The graph Laplacians may either be computed from our definition as
\begin{align*}
L_1 = A_1^* A_1  &= \kbordermatrix{ 
& 1 & 2 &3 & 4 &5 &6 \\
1 & 2   & -1    & 0   & -1    & 0    & 0\\
2 & -1    & 2   & -1    & 0    & 0    & 0\\
3 & 0   & -1   & 4   & -1   & -1   & -1\\
4 & -1    & 0   & -1    & 2    & 0    & 0\\
5 & 0    & 0   & -1    & 0    & 2   & -1\\
6 & 0    & 0   & -1    & 0   & -1    & 2
},
\\
L_2 = A_2^* A_2 &= \kbordermatrix{ 
& 1 & 2 &3 & 4 &5 &6 \\
1 & 2   & -1    & 0   & -1    & 0    & 0\\
2 & -1    & 3   & -1    & 0    & 0   & -1\\
3 & 0   & -1    & 3   & -1   & -1    & 0\\
4 & -1    & 0   & -1    & 3    & 0   & -1\\
5 & 0    & 0   & -1    & 0    & 1    & 0\\
6 & 0   & -1    & 0   & -1    & 0    & 2
},
\end{align*}
or written down directly using the usual definition \cite{Chung, Spielman},
\[
\ell_{ij} =
\begin{cases}
\deg(v_i) & \text{if}\ i = j, \\
-1 & \text{if } v_i \text{ is adjacent to } v_j, \\
0 & \text{otherwise.}
\end{cases}
\]
We obtain the same Laplacian matrix irrespective of the choice of directions on edges and the choice of $\pm 1$ for in/out-pointing edges. For us there is no avoiding the gradient operators since we need them for the graph Helmholtzian below. 

We may now find the eigenvalues of $L_1$ and $L_2$ and see that they are indeed the values we claimed in Example~\ref{eg:gip}:
\[
\lambda(L_1) = (
   0, \;
    3-\sqrt{5}, \;
    2, \;
    3, \;
    3, \;
    3+\sqrt{5}  )
=
\lambda(L_2).
\]

To write down the graph Helmholtzians, we first observe that $G_1$ has exactly one triangle (i.e., $2$-clique) whereas $G_2$ has none\footnote{Those who see two triangles should note that these are really squares, or $C_4$'s to be accurate. See also Example~\ref{eg:curl}.}. We will need to label and pick an arbitrary orientation for the triangle in $G_1$: We denote it as $T$ and orient it clockwise $3 \to 5 \to 6 \to 3$. A matrix representing the operator $\delta_1 = \curl$ may be similarly written down by indexing the columns with edges and the rows with triangles. Here we make the arbitrary choice that if the $j$th edge points in the same direction as the orientation of the $i$th triangle, then the $(i,j)$th entry is $+1$ and if it points in the opposite direction, then the entry is $-1$. For $G_1$ we get
\[
B_1 = \kbordermatrix{ 
& a & b & c & d & d & e & f\\
T & 0    & 0    & 0    & 0    & 1    & 1   & -1
}.
\]
Since $G_2$ contains no triangles, $B_2 = 0$ by definition.

We compute the graph Helmholtzians from definition,
\begin{align*}
H_1 = A_1 A_1^* + B_1^* B_1  &= \kbordermatrix{ 
& a & b & c & d & d & e & f\\
a & 2   & -1    & 0   & -1    & 0    & 0    & 0\\
b & -1    & 2   & -1    & 0   & -1    & 0   & -1\\
c & 0   & -1    & 2   & -1    & 1    & 0    & 1\\
d & -1    & 0   & -1    & 2    & 0    & 0    & 0\\
e & 0   & -1    & 1    & 0    & 3    & 0    & 0\\
f & 0    & 0    & 0    & 0    & 0    & 3    & 0\\
g & 0   & -1    & 1    & 0    & 0    & 0    & 3
}\\
H_2 = A_2 A_2^* + B_2^* B_2 &=
 \kbordermatrix{ 
& a & b & c & d & d & e & f\\
a & 2   & -1    & 0   & -1    & 0    & 0    & 1\\
b & -1    & 2   & -1    & 0   & -1    & 0   & -1\\
c & 0   & -1    & 2   & -1    & 1   & -1    & 0\\
d & -1    & 0   & -1    & 2    & 0    & 1    & 0\\
e & 0   & -1    & 1    & 0    & 2    & 0    & 0\\
f & 0    & 0   & -1    & 1    & 0    & 2   & -1\\
g & 1   & -1    & 0    & 0    & 0   & -1    & 2
}
\end{align*}
and verify that they have different spectra, as we had claimed in Example~\ref{eg:gip},
\[
\lambda(H_1) = (
    0, \;
    3-\sqrt{5}, \;
    2, \;
    3, \;
    3, \;
    3, \;
    3+\sqrt{5} ) \ne (
   0, \;
   0, \;
   3-\sqrt{5}, \;
    2, \;
    3, \;
    3, \;
    3+\sqrt{5} ) = \lambda(H_2).
\]

We now repeat the routine and convert the undirected graphs in Figure~\ref{fig:iso2} into labeled directed graphs $G_3$ (left) and $G_4$ (right) in Figure~\ref{fig:iso2a}.
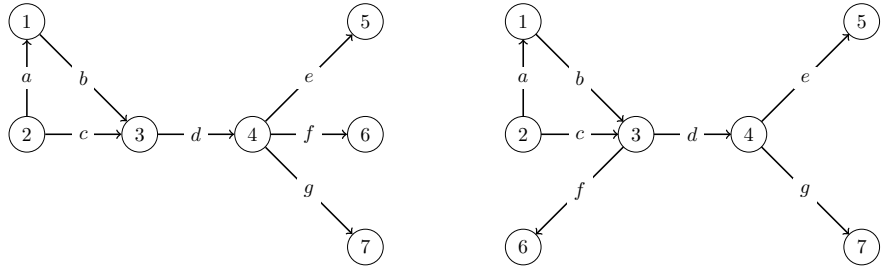
\begin{figure}[h]
\centering
\scalebox{0.75}{
\begin{tikzpicture}
\SetGraphUnit{2}
\GraphInit[vstyle=Dijkstra]
\Vertex{2}
\NO(2){1} \EA(2){3}
\EA(3){4}
\NOEA(4){5} \EA(4){6} \SOEA(4){7}
\tikzset{EdgeStyle/.style = {->}}
\Edge[label=$a$](2)(1)
\Edge[label=$b$](1)(3)
\Edge[label=$c$](2)(3)
\Edge[label=$d$](3)(4)
\Edge[label=$e$](4)(5)
\Edge[label=$f$](4)(6)
\Edge[label=$g$](4)(7)
\end{tikzpicture}}
\qquad\qquad
\scalebox{0.75}{
\begin{tikzpicture}
\SetGraphUnit{2}
\Vertex{2}
\NO(2){1} \EA(2){3}
\EA(3){4} \SOWE(3){6}
\NOEA(4){5} \SOEA(4){7}
\tikzset{EdgeStyle/.style = {->}}
\Edge[label=$a$](2)(1)
\Edge[label=$b$](1)(3)
\Edge[label=$c$](2)(3)
\Edge[label=$d$](3)(4)
\Edge[label=$e$](4)(5)
\Edge[label=$f$](3)(6)
\Edge[label=$g$](4)(7)
\end{tikzpicture}}
\caption{Labeled directed versions of the graphs in Figure~\ref{fig:iso2}.}
\label{fig:iso2a}
\end{figure}
We label both triangles in $G_3$ and $G_4$ as $T$ and orient it clockwise $2 \to 1 \to 3 \to 2$, giving us a matrix that represents both curl operators on $G_3$ and $G_4$,
\[
B_3 = B_4 =
\kbordermatrix{
& a & b & c & d & e & f & g\\
T & 1 & 1& -1& 0& 0& 0& 0
}.
\]

With these choices, we obtain the following matrix representations of the gradients, Laplacians, and Helmholtzians on $G_3$ and $G_4$, \enlargethispage{1.5\baselineskip}
\begin{align*}
A_3 &= \kbordermatrix{ 
& 1 & 2 &3 & 4 &5 &6 & 7\\
a & 1& -1& 0& 0& 0& 0& 0\\
b & -1& 0& 1& 0& 0& 0& 0\\
c & 0& -1& 1& 0& 0& 0& 0\\
d & 0& 0& -1& 1& 0& 0& 0\\
e & 0& 0& 0& -1& 1& 0& 0\\
f & 0& 0& 0& -1& 0& 1& 0\\
g & 0& 0& 0& -1& 0& 0& 1
},
\\
A_4 &=  \kbordermatrix{ 
& 1 & 2 &3 & 4 &5 &6 & 7\\
a & 1& -1& 0& 0& 0& 0& 0\\
b & -1& 0& 1& 0& 0& 0& 0\\
c & 0& -1& 1& 0& 0& 0& 0\\
d & 0& 0& -1& 1& 0& 0& 0\\
e & 0& 0& 0& -1& 1& 0& 0\\
f & 0& 0& -1& 0& 0& 1& 0\\
g & 0& 0& 0& -1& 0& 0& 1
},\\
L_3 = A_3^* A_3 &=  \kbordermatrix{ 
& 1 & 2 &3 & 4 &5 &6 & 7\\
1 & 2   & -1   & -1    & 0    & 0    & 0    & 0\\
2 & -1    & 2   & -1    & 0    & 0    & 0    & 0\\
3 & -1   & -1    & 3   & -1    & 0    & 0    & 0\\
4 & 0    & 0   & -1    & 4   & -1   & -1   & -1\\
5 & 0    & 0    & 0   & -1    & 1    & 0    & 0\\
6 & 0    & 0    & 0   & -1    & 0    & 1    & 0\\
7 & 0    & 0    & 0   & -1    & 0    & 0    & 1
},
\\
L_4 = A_4^* A_4 &=  \kbordermatrix{ 
& 1 & 2 &3 & 4 &5 &6 & 7\\
1 & 2   & -1   & -1    & 0    & 0    & 0    & 0\\
2 & -1    & 2   & -1    & 0    & 0    & 0    & 0\\
3 & -1   & -1    & 4   & -1    & 0   & -1    & 0\\
4 & 0    & 0   & -1    & 3   & -1    & 0   & -1\\
5 & 0    & 0    & 0   & -1    & 1    & 0    & 0\\
6 & 0    & 0   & -1    & 0    & 0    & 1    & 0\\
7 & 0    & 0    & 0   & -1    & 0    & 0    & 1
},\\
H_3 = A_3 A_3^* + B_3^* B_3 &= \kbordermatrix{
& a & b & c & d &  e & f & g\\
a & 3    & 0    & 0    & 0    & 0    & 0    & 0\\
b & 0    & 3    & 0   & -1    & 0    & 0    & 0\\
c & 0    & 0    & 3   & -1    & 0    & 0    & 0\\
d & 0   & -1   & -1    & 2   & -1   & -1   & -1\\
e & 0    & 0    & 0   & -1    & 2    & 1    & 1\\
f & 0    & 0    & 0   & -1    & 1    & 2    & 1\\
g & 0    & 0    & 0   & -1    & 1    & 1    & 2
},
\\
H_4 = A_4 A_4^* + B_4^* B_4 &=  \kbordermatrix{
& a & b & c & d & e & f & g\\
a & 3    & 0    & 0    & 0    & 0    & 0    & 0\\
b & 0    & 3    & 0   & -1    & 0   & -1    & 0\\
c & 0    & 0    & 3   & -1    & 0   & -1    & 0\\
d & 0   & -1   & -1    & 2   & -1    & 1   & -1\\
e & 0    & 0    & 0   & -1    & 2    & 0    & 1\\
f & 0   & -1   & -1    & 1    & 0    & 2    & 0\\
g & 0    & 0    & 0   & -1    & 1    & 0    & 2
}.
\end{align*}

As we intend to show that $G_3$ and $G_4$ have isospectral Hodge $k$-Laplacians for all $k$, we will also need to examine the Hodge $2$-Laplacian $\Delta_2$. Since $G_3$ and $G_4$ have no cliques of order higher than two, $\delta_k = 0$ for all $k > 2$ and in particular $\Delta_2 = \delta_1 \delta_1^*$. So the $1 \times 1$ matrix representing $\Delta_2$ is just
\[
P_3 \coloneqq  B_3 B_3^* = [3] = B_4 B_4^* =: P_4
\]
for both $G_3$ and $G_4$.

Finally, we verify that the spectra of the Hodge $k$-Laplacians of $G_3$ and $G_4$ are identical for $k = 0, 1, 2$, as we had claimed in Example~\ref{eg:gip}:
\begin{gather*}
\lambda(L_3) =(
   0, \;
    0.40, \;
    1, \;
    1, \;
    3, \;
    3.34, \;
    5.26)  = \lambda(L_4),\\
\lambda(H_3) = (
    0.40, \;
    1, \;
    1, \;
    3, \;
    3, \;
    3.34, \;
    5.26 ) = \lambda(H_4),\\
\lambda(P_3) = 3 = \lambda(P_4).
\end{gather*}
Observe that three eigenvalues of $L_3, L_4, H_3, H_4$ have been rounded to two decimal places --- these eigenvalues have closed form expressions (zeros of a cubic polynomial) but they are unilluminating and a hassle to typeset. So to verify that they are indeed isospectral, we check their characteristic polynomials instead, as these have integer coefficients and can be expressed exactly:
\begin{align*}
\det(L_3 -x I) &= -21 x + 112 x^2 - 209 x^3 + 178 x^4 - 73 x^5 + 14 x^6 - x^7\\
&=-x(x-3 ) (x-1)^2 (x^3 - 9x^2 + 21x -7) = \det(L_4 -x I),\\
\det(H_3 - xI) &= 63 - 357 x + 739 x^2 - 743 x^3 + 397 x^4 - 115 x^5 + 17 x^6 - x^7 \\
&= -(x-3 )^2 (x-1)^2 (x^3 - 9x^2 + 21x -7) = \det(H_4 - xI).
\end{align*}

\section{Topology, computations, and applications}

We conclude our article with this final section that (a) highlights certain deficiencies of our simplistic approach and provides pointers for further studies   (Section~\ref{sec:caveats}); (b) discusses how one may compute the quantities in this article using standard numerical linear algebra (Section~\ref{sec:compute}); and (c) proffers some high-level thoughts about applications to the information sciences (Section~\ref{sec:app}).

\subsection{Topological caveats}\label{sec:caveats}

The way we defined cohomology in Section~\ref{sec:coho} is more or less standard. The only simplification is that we had worked over a field. The notion of cohomology in topology works more generally over arbitrary rings where our simple linear algebraic approach falls short, but not by much --- all we need is to be willing to work with modules over rings instead of modules over fields, i.e., vector spaces. Unlike a vector space, a module may not have a basis and we may not necessarily be able to represent linear maps by matrices, a relatively small price to pay.

However the further simplifications in Sections~\ref{sec:harm} and \ref{sec:hodge} to avoid quotient spaces only hold when we have a field of characteristic zero (we chose $\mathbb{R}$). For example, if instead of $\mathbb{R}$, we had the field $\mathbb{F}_2$ of two elements with binary arithmetic (or indeed any field of positive characteristic), then we can no longer define inner products and statements like $\ker(B)^\perp = \im(B^*)$ make no sense. While the adjoint of a matrix may still be defined without reference to an inner product, statements like $\ker(A^* A) = \ker(A)$ are manifestly false in positive characteristic, as we saw at the end of Section~\ref{sec:NLA}.

We mentioned in Section~\ref{sec:terms} that in the way we presented things, there is no difference between cohomology and homology. This is an artifact of working over a field. In general cohomology and homology are different and are related via the universal coefficient theorem \cite{Hatcher}.

From the perspective of topology, the need to restrict to fields of zero  characteristic like $\mathbb{R}$ and $\mathbb{C}$ is a big shortcoming. For example, one would no longer be able to detect `torsion' and thereby perform basic topological tasks like distinguishing between a circle and a Klein bottle, which is a standard utility of cohomology groups over rings or fields of positive characteristics. We may elaborate on this point if the reader is willing to accept on faith that the cohomology group $H^k(G)$ in \eqref{eq:cohogp}  may still be defined even (i) when $G$ is a manifold, and (ii) when we replace our field of scalars $\mathbb{R}$ by a ring of scalars $\mathbb{Z}$. We will denote these cohomology groups over $\mathbb{R}$ and $\mathbb{Z}$ by $H^k(G; \mathbb{R})$ and $H^k(G; \mathbb{Z})$ respectively. For the circle $S^1$,  techniques standard in algebraic topology \cite{Hatcher} but beyond the scope of this article allow us to compute these:
\begin{equation}\label{eq:circle}
H^k(S^1; \mathbb{Z}) = 
\begin{cases}
\mathbb{Z} & k = 0,\\
\mathbb{Z} & k = 1,\\
0 & k \ge 2,
\end{cases}
\qquad
H^k(S^1; \mathbb{R}) = 
\begin{cases}
\mathbb{R} & k = 0,\\
\mathbb{R} & k = 1,\\
0 & k \ge 2.
\end{cases}
\end{equation}
Likewise, for the Klein bottle $K$, one gets
\begin{equation}\label{eq:klein}
H^k(K; \mathbb{Z}) = 
\begin{cases}
\mathbb{Z} & k = 0,\\
\mathbb{Z} & k = 1,\\
\mathbb{Z}_2 & k = 2,\\
0 & k \ge 3,
\end{cases}
\qquad
H^k(K; \mathbb{R}) = 
\begin{cases}
\mathbb{R} & k = 0,\\
\mathbb{R} & k = 1,\\
0 & k \ge 2.
\end{cases}
\end{equation}
Here $\mathbb{Z}_2 = \{0,1\}$ with addition performed modulo $2$; and whenever a cohomology group contains a nonzero element of finite order,\footnote{Recall that the order of an element is the number of times it must be added to itself to get $0$; but if this is never satisfied we say it has infinite order. In $\mathbb{Z}_2$, $1+ 1 =0$ so $1$ has order two.} we say that it has \emph{torsion}. There can never be torsion in $H^k(G; \mathbb{R})$ since every nonzero real number has infinite order. As we can see from \eqref{eq:circle} and \eqref{eq:klein}, $S^1$ and $K$ have identical cohomology groups over $\mathbb{R}$, i.e., we cannot tell apart  a circle from a Klein bottle with cohomology over $\mathbb{R}$. On the other hand, since $H^2(K; \mathbb{Z}) = \mathbb{Z}_2 \ne 0 = H^2(S^1; \mathbb{Z})$, cohomology over $\mathbb{Z}$ allows us to tell them apart.

Despite the aforementioned deficiencies, if one is primarily interested in engineering and scientific applications, then  we believe that our approach in Sections~\ref{sec:ped}, \ref{sec:graph}, and \ref{sec:HO} is by-and-large adequate. Furthermore, even though we have restricted our discussions in Sections~\ref{sec:graph} and \ref{sec:HO} to clique complexes of graphs, they apply verbatim to any simplicial complex.

We should add that although we did not discuss it, one classical use of cohomology and Hodge theory is to deduce topological information about an underlying topological space. Even over a field of characteristic zero, if we sample sufficiently many points $V$ from a sufficiently nice metric space $\Omega$, and set $G = (V,E)$ to be an appropriately chosen nearest neighbor graph, then
\begin{equation}\label{eq:betti}
\beta_k (G) = \dim H^k (G) = \dim \ker(\Delta_{k} )
\end{equation}
gives the number of `$k$-dimensional holes' in $\Omega$, called the \emph{Betti number}. While the kernel or $0$-eigenspace captures qualitative topological information, the nonzero eigenspaces often capture quantitative geometric information. In the context of graphs \cite{Chung, Spielman}, this is best seen in $\Delta_0$ --- its $0$-eigenpair tells us whether a graph is connected ($\beta_0(G)$ gives the number of connected components of $G$, as we saw in Section~\ref{sec:Helm}) while its smallest nonzero eigenpair tells us how connected the graph is (eigenvalue by way of the Cheeger inequality \cite[p.~26]{Chung} and eigenvector by way of the Fiedler vector \cite{Pothen}).

\subsection{Computations}\label{sec:compute}

A noteworthy point is that the quantities appearing in Sections~\ref{sec:graph} and \ref{sec:HO} are all computationally tractable\footnote{Although other related problems with additional conditions can be NP-hard \cite{DHK}.} and may in fact be computed using standard numerical linear algebra. In particular, the Hodge decomposition \eqref{eq:hohd} can be efficiently computed by solving least squares problems, which among other things gives us Betti numbers via \eqref{eq:betti} \cite{F}.  For  simplicity we will use the basic case in Section~\ref{sec:graph}  for illustration but the discussions below apply almost verbatim to the higher order cases in Section~\ref{sec:HO} as well.

Since $V$ is a finite set, $L^2(V)$, $L^2_\wedge(E)$, $L^2_\wedge(T)$ are finite-dimensional vector spaces. We may choose bases on these spaces 
and get
\[
L^2(V) \cong \mathbb{R}^p, \quad L^2_\wedge(E) \cong \mathbb{R}^n, \quad L^2_\wedge(T) \cong \mathbb{R}^m
\]
where $p,n,m$ are respectively the number of vertices, edges, and triangles in $G$. See also Section~\ref{sec:cal} for examples of how one may in practice write down matrices representing $k$-coboundary operators and Hodge $k$-Laplacians for $k = 0, 1, 2$.

Once we have represented cochains as vectors in Euclidean spaces, to compute the Helmholtz decomposition in \eqref{eq:helm1} for any given $X \in L^2_\wedge(E)$, we may solve the two least squares problems
\[
\min_{f \in L^2(V)} \| \grad f - X \| \qquad\text{and}\qquad \min_{\Phi \in L^2_\wedge(T)} \| \curl^* \Phi - X \|,
\]
to get $X_H$ as $X - \grad f - \curl^* \Phi$. Alternatively, we may solve
\[
 \min_{Y \in L_\wedge^2(E)} \| \Delta_1 Y - X \|
\]
for the minimizer $Y$ and get $X_H$ as the residual $X - \Delta_1 Y $ directly. Having obtained $\Delta_1 Y$, we may  use the decomposition \eqref{eq:sum},
\[
\setlength{\arraycolsep}{1.5pt}
\begin{matrix}
\im(\Delta_1) &= &  \im(\grad) &\oplus & \im(\curl^*),\\[0.5ex]
\Delta_1 Y & = & \grad f & + & \curl^* \Phi,
\end{matrix}
\]
and solve either
\[
\min_{f \in L^2(V)} \| \grad f - \Delta_1 Y \| \qquad\text{or}\qquad \min_{\Phi \in L^2_\wedge(T)} \| \curl^* \Phi - \Delta_1 Y \|
\]
to get the remaining two components.

We have the choice of practical, efficient, and stable methods like Krylov subspace methods for singular symmetric least squares problems \cite{CPS, CS} or specialized methods for the Hodge $1$-Laplacian with proven complexity bounds \cite{Miller}.

\subsection{Applications}\label{sec:app}

Traditional applied mathematics largely involves using partial differential equations to model physical phenomena and traditional computational mathematics largely revolves around numerical solutions of PDEs.

However, one usually needs substantial and rather precise knowledge about a phenomenon in order to write it down as PDEs. For example, one may need to know the dynamical laws (e.g., laws of motions, principle of least action, laws of thermodynamics, quantum mechanical postulates, etc) or conservation laws (e.g., of energy, momentum, mass, charge, etc) underlying the phenomenon before being able to `write down' the corresponding PDEs (as equations of motion, of continuity and transfer, constitutive equations, field equations, etc). In traditional applied mathematics, it is often taken for granted that there are known physical laws behind the phenomena being modeled.

In modern data applications, this is often a luxury. For example, if we want to build a spam filter, then it is conceivable that we would want to understand the `laws of emails.' But we would quickly come to the realization that these `laws of emails' would be too numerous to enumerate and too inexact to be formulated precisely, even if we restrict ourselves to those relevant for identifying spam. This is invariably the case for any human generated data: movie ratings, restaurant reviews, browsing behavior, clickthrough rates,  newsfeeds, tweets, blogs, instagrams, status updates on various social media, etc, but  it also applies to data from biology and medicine \cite{Lenoir}.

For such data sets, all one has is often a rough measure of how similar two data points are and how the data set is distributed. Topology can be a useful tool in such contexts \cite{C} since it requires very little --- essentially just a weak notion of separation, i.e., is there a non-trivial open set that contains those points?

If the data set is discrete and finite, which is almost always the case in applications, we can even limit ourselves to simplicial topology, where the topological spaces are simplicial complexes (see  Section~\ref{sec:cliques}). Without too much loss of generality, these may be regarded as clique complexes of graphs (see Section~\ref{sec:caveats}): data points are vertices in $V$ and proximity is characterized by cliques: a pair of data points are near each other  iff they form an edge in $E$, a triplet of data points are near one another iff they form a triangle in $T$, and so on.

In this article we have implicitly regarded a graph as a discrete analogue of a Riemannian manifold and cohomology as a discrete analogue of PDEs: standard partial differential operators on Riemannian manifolds --- gradient, divergence, curl, Jacobian, Hessian, scalar and vector Laplace operators, Hodge Laplacians ---  all have natural counterparts on graphs. An example of a line of work that carries this point of view to great fruition may be found in \cite{BenLov, BenSch, Lovasz}. Also, in this article we have only scratched the surface of cohomological and Hodge theoretic techniques in graph theory; see \cite{Hanlon} for results that go much further.

In traditional computational mathematics, discrete PDEs arise as discretizations of continuous PDEs, intermediate by-products of numerical schemes, and this accounts for the appearance of cohomology  in numerical analysis \cite{feec, AFW, DGLZ}. But in data analytic applications, discrete PDEs tend to play a more central and direct role. The discrete partial differential operators on graphs introduced in this article may perhaps serve as a bridge on which insights from traditional applied and computational mathematics could cross over and be brought to bear on modern data analytic applications. Indeed we have already begun to see some applications of the Hodge Laplacian and Hodge decomposition on graphs to:
\begin{enumerate}[\upshape (i)]
\item ranking \cite{Austin, HKW, JLYY, ODO, XHJYYL}, 
\item graphics and imaging \cite{DHLM, MMOC, TLHD},
\item games and traffic flows \cite{CMOP, COP},
\item brain networks \cite{brain},
\item data representations \cite{data},
\item deep learning \cite{deep},
\item denoising \cite{denoise},
\item dimension reduction \cite{dim},
\item link prediction \cite{link},
\item object synchronization \cite{synchron},
\item sensor network coverage \cite{coverage},
\item cryo-electron microscopy \cite{cryo},
\item generalizing effective resistance to simplicial complexes \cite{resist},
\item modeling biological interactions  between a set of molecules or communication systems with group messages \cite{group}.
\end{enumerate}

\section*{Acknowledgments}
The author owes special thanks to Gunnar Carlsson and Vin De Silva for introducing him to the subject many years ago. He also gratefully acknowledges exceptionally helpful discussions with Sayan Mukherjee and Yuan Yao, as well as the thorough reviews and many pertinent suggestions from the two anonymous referees that led to a much improved article.

\end{document}